\theoremstyle{plain}
\newtheorem{theorem}{Theorem}[section]
\newtheorem{corollary}[theorem]{Corollary}
\newtheorem{lemma}[theorem]{Lemma}
\theoremstyle{definition}
\newtheorem{remark}[theorem]{Remark}
\newtheorem{definition}[theorem]{Definition}
\renewcommand{\labelenumi}{\theenumi}
\renewcommand{\labelenumii}{\theenumii}
\renewcommand{\theenumi}{{\rm (\roman{enumi})}}
\renewcommand{\theenumii}{{\rm (\alph{enumii})}}
\newcommand{\bm}[1]{{\mbox{\boldmath $#1$}}}
\newcommand{\ZZ}{{\mathbb{Z}}}
\newcommand{\RR}{{\mathbb{R}}}
\newcommand{\remove}[1]{}
\title{Pure Nash Equilibria in Weighted Congestion Games with 
Complementarities and Beyond\thanks{An extended abstract of this paper \cite{Tak24} 
appears in the Proceedings of the 23rd International
Conference on Autonomous Agents and Multiagent Systems (AAMAS 2024).}}
\author{
Kenjiro Takazawa\thanks{Department of Industrial and Systems Engineering, Faculty of Science and Engineering, 
Hosei University, Tokyo 184-8584, Japan.  
\texttt{takazawa@hosei.ac.jp}.  
Partially supported by 
JSPS KAKENHI Grant Number 
JP20K11699, 
Japan.
}}
\date{January 2024}
\begin{document}

\maketitle

\begin{abstract}
Congestion games offer a primary model in the study of pure Nash equilibria in non-cooperative games, and a number of generalized models have been proposed in the literature. 
One line of generalization includes weighted congestion games, in which the cost of a resource is a function of the total weight of the players choosing that resource. 
Another line includes congestion games with mixed costs, in which the cost imposed on a player is a convex combination of the total cost and the maximum cost of the resources in her strategy. 
This model is further generalized to that of congestion games with complementarities. 
For the above models, the existence of a pure Nash equilibrium is proved under some assumptions, including that the strategy space of each player is the base family of a matroid and that the cost functions have a certain kind of monotonicity. 
In this paper, we deal with common generalizations of these two lines, namely weighted matroid congestion games with complementarities, and its further generalization. 
Our main technical contribution is a proof of the existence of pure Nash equilibria in these generalized models under a simplified assumption on the monotonicity, which provide a common extension of the previous results. 
We also present some extensions on the existence of pure Nash equilibria in player-specific and weighted matroid congestion games with mixed costs. 
\begin{description}
    \item[Keywords] Non-Cooperative Game Theory; 
Pure Nash Equilibrium; 
Congestion Game; 
Matroid
\end{description}
\end{abstract}

\renewcommand{\labelenumi}{\theenumi}
\renewcommand{\labelenumii}{\theenumii}
\renewcommand{\theenumi}{{\rm (\roman{enumi})}}
\renewcommand{\theenumii}{{\rm (\alph{enumii})}}

\newcommand{\agg}{g}
\newcommand{\lex}{\preceq_{\mathrm{lex}}}
\newcommand{\succlex}{\succeq_{\mathrm{lex}}}
\newcommand{\lexstrict}{\prec_{\mathrm{lex}}}
\newcommand{\lexeq}{\sim_{\mathrm{lex}}}
\newcommand{\lexneqq}{\prec_{\mathrm{lex}}}

\section{Introduction}
\label{SECintro}

The analysis of the existence of pure Nash equilibria in non-cooperative games is a central topic in algorithmic game theory \cite{NRTV07,Rou16}, 
and the model of \emph{congestion games} \cite{BV23book} play an important role in this topic. 
A congestion game models the human behavior of selecting the best strategy 
in a situation where congestion results in a larger cost. 
In a congestion game, 
each strategy is a set of \emph{resources}. 
The cost of a resource is represented by a function of the number of the players using it. 
The entire cost imposed on a player is the sum of the costs of the resources in her strategy.  
For a formal description of congestion games, 
see Section \ref{SECclassical}.

The most important feature of congestion games is that 
every congestion game is a \emph{potential game}, 
which implies that every congestion game has a pure Nash equilibrium \cite{Ros73a}. 
Further, 
the converse is also true: 
every potential game is represented by a congestion game \cite{MS96}. 
Since those primary results, 
a number of generalized models of congestion games have been proposed. 
As is mentioned below, 
however, 
a pure Nash equilibrium may not exist in those generalized models, 
and its existence is proved under certain assumptions. 
The purpose of this paper is to provide a model 
commonly generalizing previous models, 
while preserving the existence of a pure Nash equilibrium. 

\subsection{Previous Work}
\label{SECprevious}

In this section, 
we describe two major directions of 
generalizations of congestion games. 

\subsubsection{Weighted and Player-Specific Congestion Games}

In a main direction of the study of congestion games, the cost functions of the resources are generalized. 
One model in this direction is a \emph{weighted congestion game}: 
each player has a nonnegative weight; 
and 
the cost of a resource 
is a function of the total weight of the players using it. 
Another model is  a \emph{player-specific congestion game}, 
in which 
the resources have a cost function specific to each player. 

Both of the weighted and player-specific congestion games 
do not have 
a pure Nash equilibrium in general, 
but 
are guaranteed to possess one under certain assumptions. 
A congestion game is referred to as a \emph{singleton game} 
if every strategy of every player consists of a single resource, 
and as a \emph{matroid game}
if the strategy space of each player is a base family of a matroid; 
see Section \ref{SECmatroid} for the definition of matroids. 
Milchtaich \cite{Mil96} proved that every singleton player-specific congestion game has a pure Nash equilibrium. 
Ackermann, R\"oglin, and V\"ocking \cite{ARV09} extended this result to 
matroid congestion games, 
i.e.,\ 
every player-specific matroid congestion game has a pure Nash equilibrium. 
A further generalized model on polymatroids is studied by Harks, Klimm, and Peis \cite{HKP18}.  
Ackermann, R\"oglin, and V\"ocking \cite{ARV09} also proved that 
every weighted matroid congestion game have a pure Nash equilibrium, 
and this result is extended to a further generalized model of a \emph{matroid congestion game with set-functional costs} by Takazawa \cite{Tak19}. 
Some of these results are explained in detail in Section \ref{SECwp}.

\subsubsection{Congestion Games with Bottleneck Costs, Mixed Costs, and Complementarities}

In another major direction of generalization, 
the cost imposed on a player is not simply defined as the sum of the costs of the resources in the strategy. 
A typical model is a \emph{bottleneck congestion game}, 
in which 
the cost on a player is defined as the maximum cost of the resources in her strategy. 
Banner and Orda \cite{BO07} proved the existence of pure Nash equilibria 
and analyzed the price of anarchy 
in 
certain kinds of network bottleneck congestion games, 
in which a strategy of a player is a flow in a given network.

Harks, Klimm, and M\"ohring \cite{HKM13} introduced a generalized class of \emph{congestion games with lexicographical improvement property}. 
They proved 
that every bottleneck congestion game belongs to this class, 
and that 
every congestion game in this class possesses a \emph{strong equilibrium} \cite{Aum59}, 
a stronger concept of a pure Nash equilibrium. 
Harks, Hoefer, Klimm, and Skopalik \cite{HHKS13} analyzed the complexity of finding a strong equilibrium 
in bottleneck congestion games. 
In particular, 
they presented polynomial-time algorithms for 
several classes of bottleneck congestion games, 
including 
matroid bottleneck congestion games. 

Feldotto, Leder, and Skopalik \cite{FLS18} 
proposed a common generalization of classical congestion games
and bottleneck congestion games, 
called \emph{congestion games with mixed costs}. 
In this model, 
the cost imposed on a player is a convex combination of the 
total cost and the maximum cost of the resources in her strategy; 
see Section \ref{SECmixed} for detailed description. 
For this model,  
a matroid game may fail to possess a pure Nash equilibrium. 
Feldotto, Leder, and Skopalik \cite{FLS18} 
proved the existence of 
pure Nash equilibria in certain classes of this model, 
including 
singleton games, 
and matroid games with some more property 
(Theorem \ref{THMmixed} below).

Feldotto, Leder, Skopalik \cite{FLS17} introduced a further generalized model, 
\emph{congestion games with complementarities}. 
In this model, 
the cost imposed on a player 
is determined by an \emph{aggregation function}, 
which a general function of the costs of the resources in the strategy. 
A typical example of an aggregation function is 
the $L^p$ norm of the costs of the resources. 
A detailed description of this model appears in Section \ref{SECmixed}. 
For this model, 
Feldotto, Leder, Skopalik \cite{FLS17} 
proved the existence of a pure Nash equilibrium 
in matroid games under an assumption on the monotonicity of the aggregation function, 
which is defined in a sophisticated manner (Theorem \ref{THMcomple} below). 
Related work dealing with aggregation cost functions also appears in \cite{Kuk07,Kuk15}.

\subsection{Our Contribution}

The aim of this paper is to offer a model of congestion games which commonly generalizes the previous models in the two directions mentioned in Section \ref{SECprevious}. 
In particular, 
we provide a generalized model preserving the existence of pure Nash equilibria in matroid games.

Our main contribution is a common extension of the existence of pure Nash equilibria in 
weighted congestion games and congestion games with complementarities. 
Namely, 
our model is 
\emph{weighted congestion games with complementarities},  
and 
its further generalization of 
\emph{congestion games with set-functional costs and complementarities}. 
On the basis of 
the previous results \cite{ARV09,FLS17,Tak19}, 
we prove that every game in this model has a pure Nash equilibrium  
if it is a matroid game and the aggregation function has a certain monotonicity (see Theorem \ref{THMPNE} below). 
This class includes 
weighted matroid bottleneck congestion games
and 
weighted congestion games with $L^p$-aggregation functions. 
Thus, 
our result implies the guaranteed existence of pure Nash equilibria in these models (Corollaries \ref{CORbottleneck} and \ref{CORlp} below). 
We remark here that 
the definition of 
the monotonicity of the aggregation functions in our model is 
simpler than that in the previous work \cite{FLS17}, 
which provides a simpler proof for the existence of pure Nash equilibria and 
a clearer comprehension of the model.

We also generalize the model of congestion games with mixed costs 
to those with player-specific cost functions 
and 
set-functional cost functions. 
For these models, 
we extend the result of Feldotto, Leder, and Skopalik \cite{FLS18} 
to prove the existence of pure Nash equilibria under the same assumptions. 

\subsection{Organization of the Paper}

The rest of the paper is organized as follows.  
In Section \ref{SECpre}, 
we give a detailed description of the previous models. 
In Section \ref{SECmain}, 
we present our main contribution: 
the generalized model of congestion games with set-functional costs and complementarities; 
a simplified definition of the monotonicity of the aggregation functions; and 
a proof of the existence of pure Nash equilibria in matroid games with a monotone aggregation function. 
Finally, 
in Section \ref{SECmixed-g}, 
we deal with generalizations of congestion games with mixed costs. 

\section{Preliminaries}
\label{SECpre}

In this section, 
we present some basic properties of matroids, 
followed by descriptions of classical and recent generalized models of 
congestion games. 

\subsection{Matroids}
\label{SECmatroid}

Let $E$ be a finite set and $\mathcal{S} \subseteq 2^E$ its subset family. 
The pair $(E, \mathcal{S})$ is a \emph{matroid} 
if $\mathcal{S} \neq \emptyset$ and 
it satisfies the following exchange axiom: 
\begin{description}
\item[($\mbox{EX}$)] 
for $S, S' \in \mathcal{S}$ 
and for $e\in S \setminus S'$, 
there exists $f \in S' \setminus S$ 
such that 
$(S \setminus \{e\}) \cup \{f\} \in \mathcal{S}$. 
\end{description}
The subset family $\mathcal{S}$ is called the \emph{base family}, 
and a set $S\subseteq E$ belonging to $\mathcal{S}$ is called a \emph{base} 
of the matroid $(E,\mathcal{S})$. 
It is noteworthy that all bases have the same cardinality, 
i.e.,\ 
$|S| = |S'|$ for each $S,S'\in \mathcal{S}$ for every matroid $(E,\mathcal{S})$.

Optimization over matroids 
has formed an extensive research field in combinatorial optimization, 
due to its nice structure 
and wide range of application 
 (see, e.g.,\ 
\cite{Mur00,Rec89,Sch03}). 
Below we show some properties of matroids, 
which will be used in subsequent sections. 

First, the one-sided exchange axiom ($\mbox{EX}$) is equivalent to 
the following stronger 
two-sided exchange property. 
\begin{lemma}[\cite{Bru69}]
\label{LEMsimul}
Every matroid $(E, \mathcal{S})$ has the following simultaneous exchange property: 
\begin{description}
\item[($\mbox{EX}\pm$)] 
for $S, S' \in \mathcal{S}$ 
and for $e\in S \setminus S'$, 
there exists $f \in S' \setminus S$ 
such that 
$(S \setminus \{e\}) \cup \{f\} \in \mathcal{S}$ 
and 
$(S' \setminus \{f\}) \cup \{e\} \in \mathcal{S}$.  
\end{description}
\end{lemma}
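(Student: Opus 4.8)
The plan is to deduce (EX$\pm$) from the one-sided axiom (EX) via the fundamental circuit of $e$ with respect to $S'$. I will use the standard apparatus that (EX) gives rise to: the independent sets (the subsets of bases), the closure operator $\mathrm{cl}(\cdot)$, and the circuits (minimal dependent sets); in particular that every independent set of maximum cardinality is a base, that a circuit $C$ satisfies $x\in\mathrm{cl}(C\setminus\{x\})$ for each $x\in C$, that $\mathrm{cl}$ is monotone and idempotent, and that if $I$ is independent and $I\cup\{x\}$ is dependent then $I\cup\{x\}$ contains a \emph{unique} circuit, which moreover contains $x$. All of these are routine consequences of (EX) (uniqueness of the fundamental circuit comes from circuit elimination, itself a consequence of (EX)), and in the final version I would either cite a standard reference or append a short self-contained derivation.

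Fix $S,S'\in\mathcal{S}$ and $e\in S\setminus S'$. Since all bases have the same cardinality, $S'\cup\{e\}$ exceeds that cardinality, hence is dependent, so it contains a circuit $C$; because every subset of the base $S'$ is independent, necessarily $e\in C$ and $C\setminus\{e\}\subseteq S'$. For each $f\in C\setminus\{e\}$ the set $(S'\setminus\{f\})\cup\{e\}$ has cardinality $|S'|$ and contains no circuit (the only circuit of $S'\cup\{e\}$ is $C$, and $f$ was deleted), so it is a base. Thus it suffices to find $f\in C\setminus\{e\}$ with in addition $(S\setminus\{e\})\cup\{f\}\in\mathcal{S}$; such an $f$ then automatically lies in $S'\setminus S$, since $(S\setminus\{e\})\cup\{f\}$ being a base forces $f\notin S\setminus\{e\}$, while $f\neq e$. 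Suppose, for contradiction, that no such $f$ exists, i.e., $(S\setminus\{e\})\cup\{f\}$ is dependent, equivalently $f\in\mathrm{cl}(S\setminus\{e\})$, for every $f\in C\setminus\{e\}$. Then $C\setminus\{e\}\subseteq\mathrm{cl}(S\setminus\{e\})$, hence $\mathrm{cl}(C\setminus\{e\})\subseteq\mathrm{cl}(S\setminus\{e\})$; since $e\in\mathrm{cl}(C\setminus\{e\})$ we get $e\in\mathrm{cl}(S\setminus\{e\})$, so $S$ is dependent, contradicting $S\in\mathcal{S}$. This yields the desired $f$ and hence (EX$\pm$).

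The only real obstacle is that the argument leans on the closure/circuit machinery rather than on (EX) verbatim; this is entirely standard, but if one insists on staying at the level of the base family alone, an alternative is to argue by induction on $|S\setminus S'|$, the case $|S\setminus S'|=1$ being immediate and the inductive step using (EX) to produce $f_1\in S'\setminus S$ with $S-e+f_1\in\mathcal{S}$ and then, when $S'-f_1+e\notin\mathcal{S}$, repairing $S'$ by a further application of (EX) before invoking the induction hypothesis. I expect the bookkeeping in this second route (controlling which elements have been swapped) to be the main nuisance, which is why I would present the circuit-based proof.
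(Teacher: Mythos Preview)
Your proof is correct; it is the classical fundamental-circuit argument, essentially Brualdi's original proof. Note, however, that the paper does not actually prove Lemma~\ref{LEMsimul}: it merely states the result and cites \cite{Bru69}, so there is no in-paper proof to compare against. Your argument is precisely what one finds in the cited reference or in any standard matroid text (e.g., Oxley), and the one minor sloppiness---the phrase ``equivalently $f\in\mathrm{cl}(S\setminus\{e\})$'' when $f$ could a priori lie in $S\cap S'$---is harmless, since in that case $f\in S\setminus\{e\}\subseteq\mathrm{cl}(S\setminus\{e\})$ trivially and the closure argument goes through unchanged.
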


The next lemma is derived from the exchange axiom (EX) and will be used in the analysis of congestion games with mixed costs. 

\begin{lemma}
\label{LEMgreedy}
Let $(E, \mathcal{S})$ be a matroid 
and let $c_e\in \RR$ denote the cost of an element $e\in E$. 
Then, 
a base $S \in \mathcal{S}$ minimizing 
the total cost 
$\sum_{e\in S}c_e$ 
among all bases in $\mathcal{S}$ 
also minimizes 
the maximum cost 
$\max_{e\in S}\{c_e\}$
among all bases in $\mathcal{S}$. 
\end{lemma}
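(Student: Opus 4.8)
The plan is to argue by contradiction using the one-sided exchange axiom (EX). Suppose $S \in \mathcal{S}$ minimizes the total cost $\sum_{e \in S} c_e$ among all bases but fails to minimize the maximum cost, and let $S' \in \mathcal{S}$ be a base with $\max_{e \in S'} c_e < \max_{e \in S} c_e$. Pick an element $e^* \in S$ attaining $c_{e^*} = \max_{e \in S} c_e$. The first observation I would make is that $e^* \notin S'$: for every $g \in S'$ we have $c_g \le \max_{e \in S'} c_e < c_{e^*}$, so $e^*$ cannot belong to $S'$. Hence $e^* \in S \setminus S'$, and (EX) provides an element $f \in S' \setminus S$ with $(S \setminus \{e^*\}) \cup \{f\} \in \mathcal{S}$.

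Next I would compare the costs of $S$ and this new base. Since $f \in S'$, we have $c_f \le \max_{e \in S'} c_e < c_{e^*}$. Therefore the base $(S \setminus \{e^*\}) \cup \{f\}$ has total cost $\sum_{e \in S} c_e - c_{e^*} + c_f < \sum_{e \in S} c_e$, contradicting the choice of $S$ as a minimum-total-cost base. This contradiction establishes that a base minimizing the total cost also minimizes the maximum cost. (Existence of a minimizer is not an issue, since $E$ is finite and $\mathcal{S}$ is a nonempty finite family.)

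The argument reduces to a single application of (EX) followed by a one-swap cost comparison, so I do not expect a genuine obstacle. The only points requiring a little care are verifying $e^* \notin S'$ — which is what allows (EX) to be invoked with $e^*$ on the correct side of the exchange — and noting that the element $f$ swapped in, lying in $S'$, automatically has cost strictly below the bottleneck value $c_{e^*}$, which is exactly what is needed to strictly decrease the total cost and reach the contradiction.
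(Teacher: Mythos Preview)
Your argument is correct and essentially identical to the paper's: both proceed by contradiction, select an element $e^*$ attaining the maximum cost in the total-cost minimizer, apply the exchange axiom (EX) against a base with strictly smaller bottleneck value to obtain a swap element $f$ with $c_f < c_{e^*}$, and conclude that the swap strictly decreases the total cost. You are slightly more explicit than the paper in verifying $e^* \notin S'$ before invoking (EX), but otherwise the proofs coincide.
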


\begin{proof}
Let a base $S_1 \in \mathcal{S}$ minimize $\sum_{e\in S}c_e$ among all bases. 
Suppose to the contrary that $S_1$ does not minimize $\max_{e\in S}\{c_e\}$, 
i.e.,\ 
there exists another base $S_2\in \mathcal{S}$ such that 
an element $e^*\in S_1\setminus S_2$ satisfies 
\begin{align*}
c_{e^*} > c_f \quad \mbox{for each element $f\in S_2$}. 
\end{align*}
Then, 
it follows from the exchange axiom ($\mbox{EX}$) that 
there exists an element $f^*\in S_2\setminus S_1$ 
such that 
$S_1' = (S_1 \setminus \{e^*\})\cup \{f\}\in \mathcal{S}$. 
We now obtain 
$$
\sum_{e\in S_1'}c_e = \sum_{e\in S_1}c_e - c_{e^*} +  c_{f^*} < \sum_{e\in S_1}c_e, 
$$
contradicting that $S_1$ is a base minimizing $\sum_{e\in S}c_e$. 
\end{proof}

\subsection{Classical Model of Congestion Games}
\label{SECclassical}

Let $\ZZ_+$ and $\RR_+$ denote the sets of the nonnegative integers and the nonnegative real numbers, respectively. 
The classical model of congestion games is represented as 
\begin{align*}
(N, E, (\mathcal{S}_i)_{i \in N}, (c_e)_{e\in E} ). 
\end{align*}
Here, 
$N = \{1,\ldots, n\}$ and $E$ denote the sets of the players and the resources, 
respectively. 
For each player $i \in N$, 
the resource subset family $\mathcal{S}_i \subseteq 2^E$ denotes the strategy space of $i$. 
For each resource $e \in E$, 
a cost function $c_e \colon \ZZ_+ \to \RR_+$ 
is associated. 

In a congestion game 
$(N, E, (\mathcal{S}_i)_{i \in N}, (c_e)_{e\in E} )$, 
the cost imposed on a player is determined in the following way. 
A \emph{strategy profile} $\mathbf{S}=(S_1,\ldots, S_n)$ is a collection of 
the strategies of all players, 
i.e.,\ 
$S_i\in \mathcal{S}_i$ for each $i\in N$. 
For a strategy profile $\mathbf{S}=(S_1,\ldots, S_n)$ and 
a resource $e \in E$,  
let
$N_e(\mathbf{S})$ 
denote 
the set of the players choosing $e$ in her strategy, 
and 
$n_e(\mathbf{S})$ its cardinality. 
Namely, 
$$N_e(\mathbf{S}) = \{ i \in N \mid e \in S_i \}, 
\quad 
n_e(\mathbf{S})  = \left|\{ i \in N \mid e \in S_i \}\right|. 
$$
Now $c_e(n_e(\mathbf{S})) \in \RR_+$ represents 
the cost of using a resource $e$ in a strategy profile $\mathbf{S}$. 
The cost $\gamma_i(\mathbf{S})$ imposed on a player $i \in N$ 
in a strategy profile $\mathbf{S}$ 
is defined as 
the total cost of the resources in her strategy $S_i$, 
that is, 
\begin{align*}
\gamma_i(\mathbf{S})= \sum_{e \in S_i}c_e(n_e(\mathbf{S})). 
\end{align*}

For 
a strategy profile $\mathbf{S}=(S_1,\ldots, S_n)$ 
and 
a player $i\in N$,  
let $\mathbf{S}_{-i}$ denote 
the collection of the strategies in $\mathbf{S}$ other than $S_i$, 
i.e., 
$\mathbf{S}_{-i}=(S_1,\ldots, S_{i-1},S_{i+1},\ldots, S_n)$. 
A \emph{pure Nash equilibrium} is a strategy profile in which 
no player has an incentive to change her strategy. 
Namely, 
a strategy profile $\mathbf{S}=(S_1,\ldots, S_n)$ is a pure Nash equilibrium 
if 
\begin{align*}
\gamma_i(\mathbf{S}) \le \gamma_i(\mathbf{S}_{-i},S_i') 
\end{align*}
holds for each player $i\in N$ and 
each strategy $S_i'\in \mathcal{S}_i$. 
As mentioned in Section \ref{SECintro}, 
every congestion game has a pure Nash equilibrium \cite{Ros73a}.

\subsection{Weighted and Player-Specific Congestion Games}
\label{SECwp}

The model of \emph{weighted congestion games} is described as 
\begin{align*}
(N, E, (\mathcal{S}_i)_{i \in N}, (c_e)_{e\in E}, (w_i)_{i \in N} ). 
\end{align*}
The difference from the classical model is the presence of the nonnegative weight 
$w_i \in \RR_+$ of each player $i\in N$. 
For each resource $e\in E$, 
the argument of the cost function $c_e$ is 
not the cardinality of the players using $e$, 
but the sum of the weights of those players. 
For a subset $N' \subseteq N$ of players, 
let $w(N')$ denote the total weight of the players in $N'$, 
i.e.,\ 
$w(N')=\sum_{i\in N'}w_i$. 
In a strategy profile $\mathbf{S}=(S_1,\ldots, S_n)$, 
the cost $\gamma_i(\mathbf{S})$ imposed on a player $i\in N$ is described as 
\begin{align*}
\gamma_i(\mathbf{S}) = \sum_{e \in S_i}c_e( w(N_e(\mathbf{S})) ). 
\end{align*}
Hereafter, 
we assume that $c_e$ is monotonically nondecreasing 
for each resource $e\in E$, 
i.e., 
\begin{align*}
\mbox{$x < y $ implies $c_e(x)\le c_e(y)$} \quad (x,y\in \ZZ_+).
\end{align*}

In contrast to the classical model, 
weighted congestion games do not necessarily have pure Nash equilibria. 
Ackermann, R\"{o}glin, and V\"{o}cking \cite{ARV09} proved that 
a weighted congestion game has a pure Nash equilibrium 
if it is a \emph{matroid game}. 
Recall that a congestion game is a  matroid game if 
the strategy space of each player is a base family of a matroid. 
\begin{theorem}[\cite{ARV09}]
\label{THMweighted}
Every weighted matroid congestion game possesses a pure Nash equilibrium. 
\end{theorem}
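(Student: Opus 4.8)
The plan is to follow the standard potential-function-via-lexicographic-ordering argument for weighted matroid congestion games. First I would fix a strategy profile $\mathbf{S}$ and, for each resource $e\in E$, record the vector of individual costs incurred by the players using $e$; concretely, list the weights $w_i$ of the players $i\in N_e(\mathbf{S})$ in nonincreasing order and consider the sequence of partial sums, so that the player with the largest weight is charged $c_e(w_1)$, the next $c_e(w_1+w_2)$, and so on. Collecting all these individual resource-costs over all players and all resources into a single multiset, and sorting it in nonincreasing order, produces a vector $\Phi(\mathbf{S})\in\RR^m$ (where $m$ is the total number of (player, resource) incidences). The claim will be that $\Phi$, compared lexicographically, strictly decreases whenever a single player makes an improving deviation, which — since there are only finitely many strategy profiles — forces termination at a pure Nash equilibrium.

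The key step is to analyze what happens to $\Phi$ when player $i$ switches from $S_i$ to an improving strategy $S_i'$. Here is where the matroid structure enters: rather than analyzing an arbitrary deviation at once, I would decompose it into a sequence of single-element exchanges. By Lemma \ref{LEMsimul} (the simultaneous exchange property $\mbox{EX}\pm$), and more precisely by the known fact that a minimum-weight-type improvement between two matroid bases can be realized by swapping elements one at a time along a suitable bijection between $S_i\setminus S_i'$ and $S_i'\setminus S_i$, it suffices to show that each single-element exchange that (weakly) helps player $i$ does not increase $\Phi$ lexicographically, and that the overall deviation, being strictly improving, strictly decreases it. For a single swap of $e$ out for $f$ in, player $i$ leaves the "queue" on resource $e$ and joins the "queue" on resource $f$; because $c_e$ is monotonically nondecreasing, removing a player from $e$ only lowers the costs charged on $e$, and by the ordering of weights the entry that disappears from $\Phi$ is at least as large as the new entry that appears on $f$ (and the other players' charges on $f$ only weakly increase, but they were already dominated appropriately). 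Making this bookkeeping precise — that the sorted vector of all individual costs drops in lexicographic order — is the technical heart.

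The main obstacle I expect is exactly this last combinatorial accounting: one must carefully argue that when player $i$ moves onto resource $f$, the increase in the costs charged to the players already on $f$ is "paid for" in the lexicographic comparison by the strictly larger decrease happening on resource $e$ (or elsewhere along the exchange sequence). This requires pinning down the right invariant — essentially that the player performing the deviation always sits at a position in the sorted queue consistent with her weight, so that her old charge $c_e(\cdot)$ on the resource she abandons dominates her new charge $c_f(\cdot)$, while all collateral changes to other players occur strictly later in the lexicographic order. The matroid exchange property is what guarantees we can always find such a single-element improving swap (so that the deviation can be "unpacked" without ever passing through a profile outside the players' strategy spaces), and the monotonicity of the $c_e$ is what guarantees each unpacked step behaves monotonically; combining the two yields the strict lexicographic decrease and hence the theorem.
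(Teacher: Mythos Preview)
Your decomposition step contains a genuine gap. You claim that an improving deviation $S_i\to S_i'$ can be realised as a chain of single-element exchanges each of which \emph{weakly helps} player~$i$. This is false even for matroids with additive costs: take the uniform matroid $U_{2,4}$ on $\{a,b,c,d\}$ with (fixed) costs $c_a=10$, $c_b=1$, $c_c=c_d=5$. The move $\{a,b\}\to\{c,d\}$ is improving ($11\to 10$), but every two-step swap path from $\{a,b\}$ to $\{c,d\}$ contains a strictly worsening step. What \emph{is} true---and what the paper isolates as Lemma~\ref{LEMlocal}---is the weaker statement that if player~$i$ has \emph{some} improving move, then she has an improving \emph{local} move (not necessarily lying on a path to the given $S_i'$). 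That lemma is proved via the simultaneous exchange property (Lemma~\ref{LEMsimul}) and a minimality argument on $|S_i'\setminus S_i|$, not by unpacking a fixed deviation.

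Your potential is also more elaborate than needed, and the obstacle you anticipate (collateral increases on $f^*$) is an artefact of it. Your queue construction records one artificial charge per (player,\,resource) incidence, so when $i$ joins $f^*$ every later player's charge on $f^*$ moves. The paper, in the proof of the more general Theorem~\ref{THMPNE} (which specialises to Theorem~\ref{THMweighted} by taking $g$ to be the sum), instead records a single pair $\phi_e(\mathbf{S})=(c_e(N_e(\mathbf{S})),\,n_e(\mathbf{S}))$ per resource and sorts these. Under an improving local move only $\phi_{e^*}$ and $\phi_{f^*}$ change, and both new values are strictly below the old $\phi_{e^*}(\mathbf{S})$; the lexicographic drop is immediate, with no collateral bookkeeping. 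If you replace your decomposition by Lemma~\ref{LEMlocal}, your per-incidence potential can be pushed through as well, but the per-resource potential turns the ``technical heart'' you worry about into a two-line observation.
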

Ackermann, R\"{o}glin, and V\"{o}cking \cite{ARV09} also proved that 
the matroid property is 
the maximal assumption 
in a certain sense 
ensuring the existence of a pure Nash equilibrium.

A further generalized model of weighted congestion games is proposed by Takazawa \cite{Tak19}. 
A distinctive feature of this model is that 
the cost function $c_e$ of each resource $e\in E$ is a set function 
defined on the power set $2^N$ of the players: 
$c_e \colon 2^N \to \RR_+$. 
Similarly to the cost functions in weighted congestion games, 
this cost function 
$c_e$ 
is assumed to be monotonically nondecreasing, 
that is, 
$$
\mbox{$X \subseteq Y$ implies $c_e (X) \le c_e (Y)$ \quad ($X,Y\subseteq N$)}.
$$
In a strategy profile $\mathbf{S}=(S_1,\ldots, S_n)$, 
the cost $\gamma_i(\mathbf{S})$ imposed on a player $i\in N$ is described as 
\begin{align*}
\gamma_i(\mathbf{S})=\sum_{e \in S_i}c_e(N_e(\mathbf{S})). 
\end{align*}
We refer to this model as a \emph{congestion game with set-functional costs}.

It is straightforward to see that this model generalizes weighted congestion games. 
To see this fact, 
for an instance of a weighted congestion game $(N, E, (\mathcal{S}_i)_{i \in N}, (c_e)_{e\in E}, (w_i)_{i \in N} )$, 
it suffices to define a set function $c'_e \colon 2^N \to \RR_+$ by 
\begin{align}
\label{EQreduction}
c'_e(N') = c_e (w(N')) \quad (N' \subseteq N)     
\end{align}
for each resource $e\in E$. 
Takazawa \cite{Tak19} extended Theorem \ref{THMweighted} 
as follows. 
\begin{theorem}[\cite{Tak19}]
\label{THMmonotone}
Every matroid congestion game with set-functional costs has a pure Nash equilibrium. 
\end{theorem}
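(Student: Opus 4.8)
The plan is to prove the theorem by a lexicographic potential argument, extending the proof of Theorem~\ref{THMweighted} due to Ackermann, R\"oglin, and V\"ocking \cite{ARV09}. For a strategy profile $\mathbf{S}$, form the multiset consisting of one copy of $c_e(N_e(\mathbf{S}))$ for every pair $(i,e)$ with $i\in N$ and $e\in S_i$ (so a resource used by $k$ players contributes $k$ copies of its current cost), and let $\Phi(\mathbf{S})$ be the vector obtained by sorting these numbers in non-increasing order. Since every base of a matroid has the same cardinality, $|S_i|$ is independent of the choice of $S_i\in\mathcal{S}_i$, so all vectors $\Phi(\mathbf{S})$ have the same length $\sum_{i\in N}|S_i|$ and can be compared lexicographically. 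It then suffices to establish \textbf{(i)} if $\mathbf{S}$ is not a pure Nash equilibrium, then some player has a profitable deviation of the special form $(S_i\setminus\{e\})\cup\{f\}$; and \textbf{(ii)} every such profitable single exchange strictly decreases $\Phi$ in the lexicographic order. Given these, a profile minimizing $\Phi$ lexicographically exists (there are finitely many profiles), and by \textbf{(i)}--\textbf{(ii)} it must be a pure Nash equilibrium.

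For \textbf{(i)}, fix player $i$ and the opponents' strategies $\mathbf{S}_{-i}$, and note that the cost to $i$ of including a resource $e$, namely $\ell_e:=c_e(N_e(\mathbf{S}_{-i})\cup\{i\})$, does not depend on which base she chooses; hence $\gamma_i(\mathbf{S}_{-i},S_i')=\sum_{e\in S_i'}\ell_e$ and $i$'s best responses are exactly the minimum-$\ell$-weight bases of her matroid. So it is enough to show that a base not of minimum weight admits a \emph{strictly} improving single exchange. Suppose not: $S_i$ is not of minimum weight, yet $\ell_y\ge\ell_x$ whenever $x\in S_i$, $y\notin S_i$ and $(S_i\setminus\{x\})\cup\{y\}\in\mathcal{S}_i$. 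Choose a minimum-weight base $B^*$ with $|B^*\setminus S_i|$ minimum; then $B^*\setminus S_i\neq\emptyset$, so pick $e\in B^*\setminus S_i$ and apply the simultaneous exchange property (Lemma~\ref{LEMsimul}) to $B^*$ and $S_i$, obtaining $f\in S_i\setminus B^*$ with $(B^*\setminus\{e\})\cup\{f\}\in\mathcal{S}_i$ and $(S_i\setminus\{f\})\cup\{e\}\in\mathcal{S}_i$. The latter is a single exchange from $S_i$ (removing $f$, adding $e$), so $\ell_e\ge\ell_f$; hence $(B^*\setminus\{e\})\cup\{f\}$ has weight at most that of $B^*$, so it is again a minimum-weight base, yet it differs from $S_i$ in fewer elements, a contradiction.

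For \textbf{(ii)}, consider a profitable exchange replacing $e\in S_i$ by $f\notin S_i$; only the costs of $e$ and $f$ change, the user set of $e$ losing $i$ and that of $f$ gaining $i$. Put $a:=c_e(N_e(\mathbf{S}))$, the old cost of $e$. Monotonicity gives $c_e(N_e(\mathbf{S})\setminus\{i\})\le a$ and $c_f(N_f(\mathbf{S}))\le c_f(N_f(\mathbf{S})\cup\{i\})$, while profitability of the exchange says exactly $c_f(N_f(\mathbf{S})\cup\{i\})<a$; hence every cost value that changes is at most $a$, so no entry of $\Phi$ strictly above $a$ is created or destroyed. On the other hand, the users of $e$ contribute $|N_e(\mathbf{S})|$ copies of $a$ before the exchange and at most $|N_e(\mathbf{S})|-1$ afterward, and no copy of $a$ appears at $f$; therefore the number of entries equal to $a$ strictly decreases. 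Thus the two sorted vectors agree in every position holding a value greater than $a$, after which the old vector still lists copies of $a$ that the new one has exhausted; hence the new vector is lexicographically smaller, whatever happens among the values below $a$.

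The crux of the argument is the interplay in \textbf{(i)} and \textbf{(ii)} between the matroid structure and the precise form of $\Phi$. The matroid exchange property is exactly what lets one reduce an arbitrary profitable deviation to a single exchange---without it pure Nash equilibria need not exist, as recalled after Theorem~\ref{THMweighted}---and it is essential that $\Phi$ counts realized costs with multiplicity equal to the number of users: otherwise the borderline case $c_e(N_e(\mathbf{S})\setminus\{i\})=a$ in \textbf{(ii)}, where the cost of $e$ does not strictly drop, would fail to decrease the potential. Monotonicity of the cost functions is used only to keep all affected cost values on the correct side of $a$.
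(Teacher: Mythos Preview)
Your proof is correct and follows essentially the same route as the paper. The paper does not prove Theorem~\ref{THMmonotone} in isolation (it is quoted from \cite{Tak19}), but its proof of the more general Theorem~\ref{THMPNE} specializes to precisely this argument when $g$ is the sum: one first shows that any non-equilibrium profile admits an \emph{improving local move} (Lemma~\ref{LEMlocal}), and then that such a move strictly decreases a lexicographic potential.

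The only differences are cosmetic. For \textbf{(i)}, you reduce to the standard local--global optimality of matroid bases (a base with no improving single exchange is of minimum weight), whereas the paper's Lemma~\ref{LEMlocal} takes an improving move $S_i'$ with $|S_i'\setminus S_i|$ minimal and derives a contradiction from $|S_i'\setminus S_i|\ge 2$ via the simultaneous exchange property; both arguments use Lemma~\ref{LEMsimul} in the same way. For \textbf{(ii)}, you encode the potential as $n_e(\mathbf{S})$ copies of $c_e(N_e(\mathbf{S}))$ (the original Ackermann--R\"oglin--V\"ocking encoding), while the paper records one pair $(c_e(N_e(\mathbf{S})),n_e(\mathbf{S}))$ per resource and compares pairs lexicographically. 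The two encodings carry the same information for the purpose of the argument: in both, the borderline case $c_{e^*}(N_{e^*}(\mathbf{S})\setminus\{i\})=c_{e^*}(N_{e^*}(\mathbf{S}))$ is handled by the drop in the user count $n_{e^*}$, which is exactly the point you highlight in your final paragraph.
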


\begin{remark}
The same kind of generalization with set-functional costs is proposed by Kiyosue and Takazawa \cite{KT22} for \emph{budget games} \cite{DFRS15,DFRS19}. 
A budget game is a variant of congestion game in which the players share the budgets, or utilities, of the resources. 
In the original model \cite{DFRS15,DFRS19}, 
the utility of a resource is simply divided by the number of the players using that resource and allocated to those players. 
In the generalized model \cite{KT22}, 
the allocated utility is represented by a monotonically nondecreasing set function. 
Similarly to the situation of congestion games described in Theorems \ref{THMweighted} and \ref{THMmonotone}, 
the fact that the strategy space of each player is a base family of a matroid 
plays an essential role in guaranteeing the existence of a pure Nash equilibrium 
in both of the original and generalized models of 
budget games. 
\end{remark}

In a \emph{player-specific congestion game}, 
a cost function is defined for each pair of a player and a resource. 
Namely, 
a player-specific congestion game represented by 
\begin{align*}
(N, E, (\mathcal{S}_i)_{i \in N}, (c_{i,e})_{i\in N,e\in E} ), 
\end{align*}
where 
$c_{i, e}\colon \ZZ_+ \to \RR_+$ is a monotonically nondecreasing function for each $i\in N$ and $e\in E$. 
The cost $\gamma_i(\mathbf{S})$ imposed on a player $i\in N$ 
in a strategy profile $\mathbf{S}=(S_1,\ldots, S_n)$ is defined as 
$$
\gamma_i(\mathbf{S}) = \sum_{e\in S_i}c_{i,e}(n_e(\mathbf{S})).
$$

As mentioned in Section \ref{SECintro}, 
every player-specific singleton congestion game has a pure Nash equilibrium \cite{Mil96}. 
Ackermann, R\"oglin, and V\"ocking \cite{ARV09} extended the existence of pure Nash equilibria in player-specific matroid congestion games. 
\begin{theorem}[\cite{ARV09}]
\label{THMps}
Every player-specific matroid congestion game possesses a pure Nash equilibrium. 
\end{theorem}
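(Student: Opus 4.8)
The plan is to show that a suitably restricted best-response dynamics always terminates and that every terminal strategy profile is a pure Nash equilibrium; since player-specific matroid congestion games need not be potential games (improvement cycles may occur), the dynamics has to be confined to carefully chosen moves. The starting point is a local-to-global principle for a single player. Fix a strategy profile $\mathbf{S}$ and a player $i\in N$, and put $d_e=|N_e(\mathbf{S})\setminus\{i\}|$ for each $e\in E$; then the cost player $i$ pays on a resource $e$, whenever she uses it, is the fixed number $\widehat{c}_e:=c_{i,e}(d_e+1)$, not depending on the rest of $S_i$. Hence a best response of player $i$ against $\mathbf{S}_{-i}$ is precisely a minimum-weight base of $(E,\mathcal{S}_i)$ for the weights $(\widehat{c}_e)_{e\in E}$. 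By the exchange axiom (EX), matroid minimization is solved by local search over single-element swaps: $S_i$ has minimum weight if and only if there is no $e\in S_i$ and $f\notin S_i$ with $(S_i\setminus\{e\})\cup\{f\}\in\mathcal{S}_i$ and $\widehat{c}_f<\widehat{c}_e$. Translating this back, $\mathbf{S}$ is a pure Nash equilibrium exactly when no player $i$ admits an \emph{improving swap}, i.e.\ resources $e\in S_i$ and $f\notin S_i$ with $(S_i\setminus\{e\})\cup\{f\}\in\mathcal{S}_i$ and $c_{i,f}(n_f(\mathbf{S})+1)<c_{i,e}(n_e(\mathbf{S}))$.

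Second, I would use Lemma \ref{LEMsimul} to decompose a player's best response into improving swaps. If $S_i$ is player $i$'s current base and $S_i^{\ast}$ a minimum-weight base, then, picking at each step an $e_j\in T^{j}\setminus S_i^{\ast}$ and applying simultaneous exchange against $S_i^{\ast}$, one obtains a chain $S_i=T^{0},T^{1},\dots,T^{k}=S_i^{\ast}$ in which each $T^{j+1}=(T^{j}\setminus\{e_j\})\cup\{f_j\}$ differs from $T^{j}$ by a single swap; optimality of $S_i^{\ast}$, applied to the base $(S_i^{\ast}\setminus\{f_j\})\cup\{e_j\}$ furnished by Lemma \ref{LEMsimul}, forces $\widehat{c}_{f_j}\le\widehat{c}_{e_j}$ for every $j$, and at least one inequality is strict whenever $S_i$ was not already a best response. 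To break ties among minimum-weight bases, I would let every player move \emph{lazily}, that is, to a minimum-weight base maximizing the overlap with her current strategy, realized by swaps as above. A terminal profile of this dynamics admits no improving swap and hence, by the first step, is a pure Nash equilibrium.

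The crux is termination, and I expect essentially all the difficulty to be concentrated here; the matroid steps above are routine given the exchange axiom and Lemma \ref{LEMsimul}. The obvious candidate potential --- the multiset $\{c_{i,e}(n_e(\mathbf{S})):i\in N,\ e\in S_i\}$ listed in nonincreasing order and compared by $\lex$ --- does \emph{not} decrease here. When player $i$ swaps $e$ for $f$, the entries of the other users of $e$ drop and the removed entry $c_{i,e}(n_e(\mathbf{S}))$ strictly exceeds the new entry $c_{i,f}(n_f(\mathbf{S})+1)$; but the entries $c_{j,f}(n_f(\mathbf{S})+1)$ of the other users of $f$ rise, and because the cost functions are player-specific these increases may surpass every entry that dropped. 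This is exactly where the argument behind Theorem \ref{THMweighted} --- in which the cost on $f$ is the same for all of its users, so the risen entries stay below the removed one --- breaks down: it is the player-specificity, not the presence of weights, that is the obstacle, and the same phenomenon already occurs in Milchtaich's singleton model, where a global improvement path need not be finite. To get around it I would argue, following \cite{Mil96,ARV09}, that the laziness restriction forbids the offending cycles: along a maximal sequence of lazy improving moves one exhibits a refined, player-indexed lexicographic quantity that strictly decreases, so the sequence must end, and its terminal profile is the desired pure Nash equilibrium.
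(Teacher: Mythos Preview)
The paper does not prove Theorem~\ref{THMps}; it is quoted from \cite{ARV09} and used only as a black box in Section~\ref{SECmixed-g}. There is therefore no in-paper proof to compare your proposal against.

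Judged on its own, your outline has a real gap exactly where you place the difficulty. The first two paragraphs --- equilibrium is equivalent to the absence of an improving swap, and a best response decomposes into single swaps via Lemma~\ref{LEMsimul} --- are correct and are routine matroid theory; they are not the content of the theorem. Termination is, and there you only assert that ``one exhibits a refined, player-indexed lexicographic quantity that strictly decreases'' without naming the quantity or verifying the decrease. Worse, the laziness restriction cannot by itself rule out cycles: in a singleton game every best-response move is already lazy in your sense (a base has one element, so the overlap with the current base is either full or zero), yet \cite{Mil96} already exhibits player-specific singleton games in which best-response sequences cycle. What Milchtaich actually proves is that from every profile \emph{some} finite improvement path reaches an equilibrium, and constructing that particular path --- by an induction on the number of players, not by a potential and not by restricting which best response a deviating player may choose --- is the whole argument. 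The matroid extension in \cite{ARV09} builds on that construction rather than on a decreasing lexicographic potential along lazy moves. Your last paragraph is thus a placeholder for the entire proof; to complete it you would need to replace it with the actual path-construction machinery of \cite{Mil96,ARV09}, not a refinement of the potential from the weighted case.
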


\subsection{Congestion Games with Mixed Costs and with Complementarities}
\label{SECmixed}

Feldotto, Leder, and Skopalik \cite{FLS18} 
proposed a common generalization of classical congestion games
and bottleneck congestion games, 
called \emph{congestion games with mixed costs}. 
A congestion game with mixed costs is represented by a tuple 
\begin{align*}
(N, E, (\mathcal{S}_i)_{i \in N}, (\ell_e)_{e\in E}, (b_e)_{e \in E}, (\alpha_i)_{i\in N}). 
\end{align*}
For each resource $e \in E$, 
let $\ell_e \colon \ZZ_+ \to \RR_+$ and $b_e \colon \ZZ_+ \to \RR_+$ denote 
its \emph{latency cost function} and \emph{bottleneck cost function}, 
respectively. 
We assume that both 
$\ell_e$ and $b_e$ are monotonically nondecreasing. 
For each player $i \in N$, 
a real number $\alpha _i \in [0,1]$ denotes her \emph{preference value}. 
In a strategy profile $\mathbf{S}=(S_1,\ldots, S_n)$, 
the cost $\gamma_i(\mathbf{S})$ imposed on a player $i \in N$ is defined as  
\begin{align*}
\gamma_i(\mathbf{S}) = 
\alpha_i\cdot \sum_{e \in S_i}\ell_e(n_e(\mathbf{S})) + (1 - \alpha_i)\cdot\max_{e \in S_i}\{ b_e (n_e(\mathbf{S})) \}. 
\end{align*}
It is straightforward to see that 
a congestion game with mixed costs 
is exactly a classical congestion game 
if $\alpha_i=1$ for each player $i \in N$, 
and 
a bottleneck congestion game 
if $\alpha_i=0$ for each player $i \in N$.  

On the existence of pure Nash equilibria in congestion games with mixed costs, 
Feldotto, Leder, and Skopalik \cite{FLS18} proved the following theorem. 
For a congestion game with mixed costs, 
the cost functions 
$\ell_e$  and $b_e$ $(e\in E)$
have \emph{monotone dependence} if 
there exists a monotonically nondecreasing function $d \colon \RR_+ \to \RR_+$ satisfying that 
\begin{align*}
b_e (x) = d(\ell_e(x)) \qquad (x \in \ZZ_+)
\end{align*} 
for each resource $e \in E$. 

\begin{theorem}[\cite{FLS18}]
\label{THMmixed}
A congestion game 
with mixed costs 
has a pure Nash equilibrium 
in the following three cases. 
\begin{enumerate}
\item
\label{ENUsingleton}
	It is a singleton game. 
\item
\label{ENU01}
	It is a matroid game and $\alpha_i \in \{0,1\}$ for each player $i\in N$. 
\item
\label{ENUmd}
	It is a matroid game and the cost functions have monotone dependence. 
\end{enumerate}
\end{theorem}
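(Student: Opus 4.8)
The plan is to reduce all three cases to existence results that are already available --- Rosenthal's theorem \cite{Ros73a} for classical congestion games and Theorem~\ref{THMps} for player-specific matroid congestion games --- and then to use Lemma~\ref{LEMgreedy} to upgrade the resulting equilibrium to one of the original mixed-cost game. The observation that makes this work is that once the strategies $\mathbf{S}_{-i}$ of the other players are held fixed, every strategy $S$ of player $i$ that contains a given resource $e$ induces the \emph{same} congestion $n_e(\mathbf{S}_{-i})+1$ on $e$; hence, from $i$'s point of view, each resource $e$ carries a fixed weight $\ell_e(n_e(\mathbf{S}_{-i})+1)$ (resp.\ $b_e(n_e(\mathbf{S}_{-i})+1)$), so her best-response problem over the matroid $\mathcal{S}_i$ is a fixed-weight matroid optimization problem, and Lemma~\ref{LEMgreedy} applies to it directly. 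This is the device that circumvents the self-congestion issue, which would otherwise make Lemma~\ref{LEMgreedy} inapplicable to a congestion game.

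For Case~\ref{ENUsingleton} no modification is needed: when $S_i$ is a single resource $e$, both the sum and the maximum in $\gamma_i(\mathbf{S})$ collapse to $e$, so $\gamma_i(\mathbf{S})=c_{i,e}(n_e(\mathbf{S}))$ with $c_{i,e}:=\alpha_i\ell_e+(1-\alpha_i)b_e$, a monotonically nondecreasing function; the game is then a player-specific singleton congestion game and has a pure Nash equilibrium by Milchtaich's theorem \cite{Mil96} (equivalently Theorem~\ref{THMps}, since singleton strategy spaces are rank-one matroid base families).

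For Case~\ref{ENUmd} I would take as the modified game the classical congestion game obtained by replacing $\gamma_i$ with $\hat\gamma_i(\mathbf{S}):=\sum_{e\in S_i}\ell_e(n_e(\mathbf{S}))$; by \cite{Ros73a} it has a pure Nash equilibrium $\mathbf{S}^*$. Writing $m_e:=n_e(\mathbf{S}^*_{-i})+1$, optimality of $S_i^*$ in the modified game means $S_i^*$ minimizes $\sum_{e\in S}\ell_e(m_e)$ over $S\in\mathcal{S}_i$, so Lemma~\ref{LEMgreedy} hands us that $S_i^*$ also minimizes $\max_{e\in S}\ell_e(m_e)$. A short computation using monotone dependence and the monotonicity of $d$ rewrites the cost $i$ would incur by deviating to $S$ in the original game as $\alpha_i\sum_{e\in S}\ell_e(m_e)+(1-\alpha_i)\,d\bigl(\max_{e\in S}\ell_e(m_e)\bigr)$; since $S_i^*$ minimizes the sum and (hence $d$ of) the maximum and $\alpha_i,1-\alpha_i\ge 0$, it minimizes this combination as well, so no player can improve and $\mathbf{S}^*$ is a pure Nash equilibrium of the original game.

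Case~\ref{ENU01} follows the same template, but now the modified game must be \emph{player-specific}: a latency player ($\alpha_i=1$) keeps the cost function $\ell_e$ on every resource, a bottleneck player ($\alpha_i=0$) is assigned $b_e$, and every player's cost is the sum over her chosen resources. This is a player-specific matroid congestion game, so Theorem~\ref{THMps} gives a pure Nash equilibrium $\mathbf{S}^*$; latency players face an identical cost in both games and so have no improving move, while for a bottleneck player $i$, optimality of $S_i^*$ in the modified game says it minimizes $\sum_{e\in S}b_e(m_e)$ over $S\in\mathcal{S}_i$ with $m_e:=n_e(\mathbf{S}^*_{-i})+1$, whence Lemma~\ref{LEMgreedy} shows it also minimizes $\max_{e\in S}b_e(m_e)$ --- precisely her cost in the original game --- so $\mathbf{S}^*$ is a pure Nash equilibrium of the original game. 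I expect the main obstacle to lie in Case~\ref{ENU01}: unlike the other two, the latency/bottleneck split cannot be absorbed into a single classical congestion game, so the argument must route through the player-specific existence result, and one has to check that the ``sum-ified'' bottleneck players still form a legitimate player-specific matroid congestion game and that the frozen-opponent reduction genuinely licenses the appeal to Lemma~\ref{LEMgreedy}.
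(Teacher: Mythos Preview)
Your proposal is correct and follows essentially the same route as the paper. The paper states Theorem~\ref{THMmixed} without proof (it is cited from \cite{FLS18}), but its proofs of the generalizations in Section~\ref{SECmixed-g} specialize exactly to your argument: Case~\ref{ENUsingleton} via the player-specific cost $c_{i,e}=\alpha_i\ell_e+(1-\alpha_i)b_e$ and Milchtaich's theorem; Case~\ref{ENU01} via the player-specific game with $c''_{i,e}=\ell_e$ or $b_e$ and Theorem~\ref{THMps}, then Lemma~\ref{LEMgreedy} for the bottleneck players; Case~\ref{ENUmd} via the sum-only auxiliary game, then Lemma~\ref{LEMgreedy} plus $\max_e d(\ell_e)=d(\max_e\ell_e)$. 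Your explicit freezing device $m_e=n_e(\mathbf{S}^*_{-i})+1$ is in fact cleaner than the paper's phrasing, which applies Lemma~\ref{LEMgreedy} to the expression $\ell_e(N_e(\mathbf{S}_{-i},S_i))$ without spelling out why these are fixed weights; and for Case~\ref{ENUmd} you invoke Rosenthal's theorem rather than Theorem~\ref{THMmonotone}, which is the natural choice at this level of generality.
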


Feldotto, Leder, and Skopalik \cite{FLS17} proposed a further generalized model of 
a \emph{congestion game with complementarities}, 
which is represented by a tuple 
\begin{align*}
(N,E, (\mathcal{S}_i)_{i\in N}, (c_e)_{e\in E}, g).
\end{align*}
In this model, 
$g\colon \RR_+^r \to \RR$ denotes the aggregation function, 
which is defined on $\RR_+^r$ for a positive integer $r$, 
and 
we assume that the sizes of the strategies of all players are equal to $r$, 
i.e.,\ 
$|S_i| = r$ for each player $i\in N$ and each strategy $S_i\in \mathcal{S}_i$.  
For a player $i\in N$ and a strategy profile $\mathbf{S}=(S_1,\ldots, S_n)$, 
where $S_i =\{e_1,\ldots, e_{r}\}$, 
the cost $\gamma_i(\mathbf{S})$ on $i$ in $\mathbf{S}$ is described as 
\begin{align*}
\gamma_i(\mathbf{S})= g(c_{e_1} ( n_{e_1}(\mathbf{S}) ), \ldots, c_{e_{r}} ( n_{e_{r}}(\mathbf{S}) ) ). 
\end{align*}

It is straightforward to see that congestion games with complementarities 
generalize classical congestion games and bottleneck congestion games. 
An important special case analyzed in detail in \cite{FLS17} is 
\emph{congestion games with $L^p$-aggregation functions}, 
i.e., 
\begin{align}
\label{EQlp}
g(x_{1}, \ldots, x_r ) 
= \left( \sum_{j=1}^r {x_j}^{p} \right)^{\frac{1}{p}}. 
\end{align}
for some value $p \ge 1$. 
It is also noteworthy that, 
if the aggregation function is player-specific, 
then this model includes 
congestion games with mixed costs.

Feldotto, Leder, Skopalik \cite{FLS17} proved the existence of a pure Nash equilibrium 
under the assumption that it is a matroid game 
and 
the aggregation function 
$g$ is \emph{weakly monotone}. 
Weak monotone functions 
generalize the monotonically nondecreasing functions,  
and 
are defined in a sophisticated manner. 
The readers are referred to their original paper \cite{FLS17} 
for the definition of weakly monotone functions.

\begin{theorem}[\cite{FLS17}]
\label{THMcomple}
Every matroid congestion game with complementarities 
has a pure Nash equilibrium 
if 
the aggregation function is 
weakly monotone. 
\end{theorem}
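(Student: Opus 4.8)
The plan is to prove the theorem by analysing best-response dynamics and showing that they cannot cycle. Since each $c_e$ is a function of the load $n_e(\mathbf{S})\in\{0,1,\dots,n\}$, only finitely many load vectors — hence only finitely many relevant profiles — can occur, so it suffices to attach to each profile a potential value drawn from a finite, linearly ordered set and to show that, whenever the current profile is not an equilibrium, some well-chosen improving move strictly decreases this value. For classical (sum) costs the right potential is Rosenthal's, and for bottleneck costs it is the vector of resource costs sorted non-increasingly; since a game with complementarities interpolates between these two extremes, I would take the latter, more robust choice: to a profile $\mathbf{S}$ associate the multiset $\{\, c_e(n_e(\mathbf{S})) : i\in N,\ e\in S_i \,\}$ of all player–resource incidence costs, list it as a vector ordered non-increasingly, and call it $\Phi(\mathbf{S})$. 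The target statement is then that if player $i$ has a profitable deviation in $\mathbf{S}$, then she has one to a strategy $T\in\mathcal{S}_i$ with $\Phi(\mathbf{S}_{-i},T)\lexstrict\Phi(\mathbf{S})$. This is the matroid template behind Theorems~\ref{THMweighted}, \ref{THMmonotone}, and~\ref{THMps}, and the standing assumption $|S_i|=r$ for all players guarantees these sorted vectors all have a common length, so all comparisons are between vectors in $\RR_+^{nr}$.

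To choose $T$ I would not allow $i$ to jump to an arbitrary best response but to the \emph{lexicographically minimal} one: among all strategies minimising $\gamma_i(\mathbf{S}_{-i},\cdot)$, pick the one whose own non-increasingly sorted incidence-cost vector $\big(c_e(n_e(\mathbf{S}_{-i},T))\big)_{e\in T}$ is lexicographically smallest. The transition $S_i\to T$ is then decomposed into a chain of single-element exchanges by repeated use of the simultaneous exchange property of matroids (Lemma~\ref{LEMsimul}): at each step take the element $e$ of the current symmetric difference with $T$ having the \emph{largest} current incidence cost, find $f\in T\setminus S_i$ with $(S_i\setminus\{e\})\cup\{f\}\in\mathcal{S}_i$ and $(T\setminus\{f\})\cup\{e\}\in\mathcal{S}_i$, and swap $e$ for $f$; after $|S_i\setminus T|$ such steps the strategy has become $T$. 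One then argues that each elementary swap already decreases $\Phi$ lexicographically. Removing $e$ deletes the incidence $(i,e)$ and, since costs are non-decreasing, only weakly lowers the costs of the remaining players on $e$; adding $f$ creates the incidence $(i,f)$ and raises the cost on $f$ of every other player there to exactly $c_f(n_f(\mathbf{S})+1)$, which is the same value $i$ herself now pays on $f$. Hence controlling $i$'s own new incidence cost on $f$ controls the externalities as well, and the whole question reduces to showing that the newly-appearing or increased entries lie strictly below, in the sorted order, the entry $c_e(n_e(\mathbf{S}))$ that was deleted — for otherwise $i$ would have had a cheaper, or an equally cheap but lexicographically smaller, best response than $T$, contradicting the choice of $T$.

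Weak monotonicity of the aggregation function $g$ is exactly the hypothesis that makes this last implication go through: it is what lets one translate the global scalar fact ``$\gamma_i=g(\cdot)$ strictly decreased'' into a statement about the sorted vector of the individual resource costs inside $i$'s strategy — namely that a profitable swap genuinely pushes down the top entries of that vector rather than merely redistributing mass among the lower coordinates, which an $L^p$-type trade-off such as $(3,0)\mapsto(2,2)$ could otherwise do. Under coordinate-wise monotonicity this is immediate; weak monotonicity is the minimal relaxation under which one can still conclude, for instance, that a best-response swap never raises the maximum incidence cost of $i$'s strategy above the value being removed, and, by iterating this over tied top entries, controls the entire relevant prefix of $\Phi$.

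The step I expect to be the main obstacle is precisely this coupling of three competing effects: (i) the matroid exchange structure forces a one-swap-at-a-time decomposition and dictates the swap order; (ii) the externalities on the other players — helped on $e$, hurt on $f$ — must be shown never to create a fresh top entry of $\Phi$; and (iii) $g$ is only weakly, not coordinate-wise, monotone, so ``$i$ improved'' tells us nothing directly about any single coordinate. Getting the largest-cost-first swap order to interlock with the lexicographic minimality of $T$, so that every elementary step is a true lexicographic drop of $\Phi$ and not merely a non-increase, is the delicate part; once that is in place, finiteness of the range of $\Phi$ immediately forbids cycling and yields a pure Nash equilibrium, and the specialisations to bottleneck and $L^p$-aggregation games follow by checking weak monotonicity of the relevant $g$.
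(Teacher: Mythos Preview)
Your overall strategy --- a sorted-vector potential that strictly drops under well-chosen improving moves, with matroid exchange doing the structural work --- is the right one and matches the paper's. But the route you take is harder than necessary, and the step you flag as ``the main obstacle'' is precisely the one the paper's argument avoids.

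The paper does not prove Theorem~\ref{THMcomple} directly (it is cited from \cite{FLS17}), but its proof of the more general Theorem~\ref{THMPNE} specialises to give one. That proof proceeds in two stages. First, Lemma~\ref{LEMlocal}: if $\mathbf{S}$ is not an equilibrium then some player has an improving move $S_i'$ with $|S_i'\setminus S_i|=1$. The argument takes an improving move with $|S_i'\setminus S_i|$ minimal, applies simultaneous exchange (Lemma~\ref{LEMsimul}) to produce $T_i=(S_i\setminus\{e^*\})\cup\{f^*\}$ and $T_i'=(S_i'\setminus\{f^*\})\cup\{e^*\}$, and uses weak monotonicity of $g$ (via the induced total preorder $\preceq_g$ on $\RR_+$) to derive the contradictory pair $c_{e^*}(N_{e^*}(\mathbf{S}))\preceq_g c_{f^*}(N_{f^*}(\mathbf{S}'))$ and $c_{f^*}(N_{f^*}(\mathbf{S}'))\prec_g c_{e^*}(N_{e^*}(\mathbf{S}))$ whenever $|S_i'\setminus S_i|\ge 2$. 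Second, once the improving move is local, the potential argument is a one-liner: a single swap $e^*\mapsto f^*$ with $\gamma_i$ strictly decreasing forces $c_{f^*}(N_{f^*}(\mathbf{S}'))\prec_g c_{e^*}(N_{e^*}(\mathbf{S}))$, and only the two entries for $e^*$ and $f^*$ change in the potential.

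Your plan instead fixes a lex-minimal best response $T$ and decomposes $S_i\to T$ into single swaps. The difficulty is that the intermediate strategies need not be improving for $i$, so you cannot read off strict inequalities from $\gamma_i$; you have to extract them from the lex-minimality of $T$. But lex-minimality only gives $c_f(n_f{+}1)\preceq_g c_e(n_e)$ at each step, not strict $\prec_g$: when $c_f(n_f{+}1)\sim_g c_e(n_e)$, the alternative response $(T\setminus\{f\})\cup\{e\}$ is merely lex-equivalent to $T$, no contradiction arises, and your potential need not strictly drop at that step. Closing this is possible but fiddly; Lemma~\ref{LEMlocal} makes the whole decomposition unnecessary.

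One further point: the sorting in your potential must use the preorder $\preceq_g$ induced by weak monotonicity, not the standard order on $\RR_+$ --- nothing in Definition~\ref{DEFwm} forces these to agree. The paper makes this explicit by defining $\preceq_g$ and taking $\phi_e(\mathbf{S})=(c_e(N_e(\mathbf{S})),n_e(\mathbf{S}))$ per resource, compared lexicographically with first coordinate ordered by $\preceq_g$ and second coordinate $n_e$ as a tiebreaker; the tiebreaker is what guarantees a strict drop even when $c_{e^*}(N_{e^*}(\mathbf{S}'))\sim_g c_{e^*}(N_{e^*}(\mathbf{S}))$.
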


\section{Weighted Congestion Games with Complementarities and Beyond}
\label{SECmain}

The central subject of this paper is 
described in this section. 
We deal with 
a common generalization of weighted congestion games and congestion games with complementarities, 
referred to as \emph{weighted congestion games with complementarities}, 
and a further generalized model of \emph{congestion games with set-functional costs and complementarities}. 
Below we investigate 
the latter model, 
because the former model can be represented as the latter model with 
cost function $c_e$ defined by \eqref{EQreduction} for each resource $e\in E$.

\subsection{Our Model}

A \emph{congestion game with set-functional costs and complementarities} is described by a tuple
\begin{align*}
(N, E, (\mathcal{S}_i)_{i\in N}, (c_e)_{e\in E}, \agg). 
\end{align*}
Again we assume that 
the aggregation function $g\colon \RR_+^r \to \RR$ is defined on $\RR_+^r$ for some positive integer $r$ 
and 
$|S_i|=r$ for each player $i\in N$ and each strategy $S_i\in \mathcal{S}_i$. 
The difference from congestion games with complementarities described in Section \ref{SECmixed} is that 
the cost function $c_e\colon 2^N\to \RR_+$ is a set function for each resource $e\in E$. 
Consequently, 
for a player $i\in N$ and a strategy profile $\mathbf{S}=(S_1,\ldots, S_n)$, 
where $S_i =\{e_1,\ldots, e_{r}\}$, 
the cost $\gamma_i(\mathbf{S})$ on $i$ in $\mathbf{S}$ is described as 
\begin{align*}
\gamma_i(\mathbf{S})= g(c_{e_1}  (N_{e_1}(\mathbf{S}) ), \ldots, c_{e_{r}} ( N_{e_{r}}(\mathbf{S}) ) ). 
\end{align*}

The monotonicity  of the cost functions $c_e$ ($e\in E$) and the aggregation function $g$ 
in the previous models is represented in the following manner. 
First, 
in the aggregation of the costs of the resources in a strategy of a player, 
we take no account of the order of the resources. 
Formally, 
we assume that 
$$
g(\bm{v}) = g(\sigma(\bm{v}))
$$
for each cost vector $\bm{v} \in \RR_+^{r}$ and 
each permutation $\sigma$ of a set of $r$ elements. 
Then, 
we adopt the following definition of weakly monotone functions, 
which is simpler than that in \cite{FLS17}. 
\begin{definition}
\label{DEFwm}
Let $r$ be a positive integer. 
A function $g \colon \RR_+^r \to \RR$ is \emph{weakly monotone} if, 
for each $x,y \in \RR_+$, 
it holds that 
$$g(\bm{v}',x) \le g(\bm{v}',y) \quad \mbox{for each $\bm{v}' \in \RR^{r-1}_+$}$$
or 
$$g(\bm{v}',x) \ge g(\bm{v}',y) \quad \mbox{for each $\bm{v}' \in \RR^{r-1}_+$}.$$
\end{definition}
It is straightforward to see that 
a function $g \colon \RR_+^r \to \RR$ such that 
$$
\mbox{$\bm{u} \le \bm{v}$ implies $g(\bm{u}) \le g(\bm{v})$ \quad($\bm{u},\bm{v}\in \RR^r$)}
$$
is weakly monotone, 
where $\bm{u} \le \bm{v}$ means $u_j \le v_j$ for each $j=1,\ldots, r$. 
For instance, 
the $L^p$-aggregation function \eqref{EQlp} is weakly monotone. 

Associated with 
Definition \ref{DEFwm} 
of the weak monotonicity of the aggregation function $g$, 
the monotonicity of the cost function $c_e \colon 2^N \to \RR$ ($e\in E$) 
is adjusted in the following way. 
\begin{definition}
\label{DEFmonotonecost}
For a weakly monotone aggregation function $g\colon \RR^r_+\to \RR$, 
a cost function $c\colon 2^N\to \RR_+$ is \emph{monotonically nondecreasing} 
with respect to $g$ if, 
for each $X,Y \subseteq N$ with $X \subseteq Y$, 
it holds that 
$$
g(\bm{v}', c(X)) \le 
g(\bm{v}', c(Y)) 
\quad \mbox{for each $\bm{v}' \in \RR^{r-1}_+$}.
$$    
\end{definition}
Definition \ref{DEFmonotonecost} means that 
more players on one resource result in larger cost, 
even in the weak monotonicity of the aggregation function. 

\subsection{Pure Nash Equilibria in Matroid Games}

Our main technical contribution is the following theorem, 
which guarantees the existence of a pure Nash equilibrium 
in a matroid congestion game with set-functional costs and complementarities. 
This theorem amounts to a common extension of 
Theorems \ref{THMweighted}, \ref{THMmonotone}, and \ref{THMcomple}. 

\begin{theorem}
\label{THMPNE}
A matroid congestion game 
$(N, E, (\mathcal{S}_i)_{i\in N}, (c_e)_{e\in E}, \agg)$ 
with set-functional costs and complementarities 
has a pure
Nash equilibrium 
if the aggregation function $g\colon \RR_+^r \to \RR$ is weakly monotone 
and the cost function $c_e\colon 2^N\to \RR_+$ is monotonically nondecreasing 
with respect to $g$
for each resource $e\in E$. 
\end{theorem}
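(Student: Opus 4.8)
The plan is to mimic the classical argument of Ackermann–Röglin–Vöcking for weighted matroid congestion games and adapt it to the aggregation setting, using the two main structural facts available: the simultaneous exchange property (Lemma~\ref{LEMsimul}) and the weak monotonicity of $g$ together with the $g$-monotonicity of the cost functions $c_e$. First I would set up a \emph{lexicographic potential}. Order the resources $e\in E$ and, for a strategy profile $\mathbf{S}$, form the multiset (or sorted vector) of quantities $c_e(N_e(\mathbf{S}))$ over all resources $e$ that are used by at least one player, listing each such resource with multiplicity $n_e(\mathbf{S})$ — equivalently, collect the $r$ cost-components $c_{e_j}(N_{e_j}(\mathbf{S}))$ contributing to each player $i$'s cost, over all players, into one big multiset of size $\sum_i r = nr$. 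Sort this multiset in nonincreasing order to obtain a vector $\Phi(\mathbf{S})$, and compare profiles by the lexicographic order $\lex$ on these sorted vectors. The claim will be that a profile $\mathbf{S}$ that is lexicographically minimal among all strategy profiles (the set of profiles is finite, so a minimizer exists) is a pure Nash equilibrium.

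The core of the argument is the improvement step. Suppose $\mathbf{S}$ is lexicographically minimal but some player $i$ can deviate from $S_i$ to $S_i'\in\mathcal{S}_i$ with $\gamma_i(\mathbf{S}_{-i},S_i') < \gamma_i(\mathbf{S})$. Using the simultaneous exchange property repeatedly (the standard ``exchange graph'' / cycle-free token argument for matroid bases), I would reduce to the case of a \emph{single-element swap}: there is $e\in S_i\setminus S_i'$ and $f\in S_i'\setminus S_i$ with $(S_i\setminus\{e\})\cup\{f\}\in\mathcal{S}_i$, such that moving player $i$ from $e$ to $f$ is itself profitable (or at least that among the swaps decomposing the deviation, at least one is profitable — this is where one must be careful, since $g$ is only weakly monotone and not additive, so ``profitable deviation decomposes into a profitable swap'' is not as immediate as in the additive case). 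The key point to exploit: after the swap, $n_e$ decreases by one and $n_f$ increases by one; by $g$-monotonicity of $c_e$ the component $c_e(\cdot)$ weakly drops for every player still using $e$, while by weak monotonicity of $g$ the relevant effect of $c_f$ on each player using $f$ is consistently signed. I would argue that the new sorted cost-vector is lexicographically strictly smaller: the largest component that changes is the one player~$i$ improves upon, and all other changes are monotone in the ``right'' direction — using Definitions~\ref{DEFwm} and~\ref{DEFmonotonecost} to show that no component of any other player can strictly increase past the threshold that would ruin the lexicographic comparison. This contradicts minimality of $\mathbf{S}$.

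I expect the main obstacle to be exactly the interface between \emph{weak} monotonicity and the lexicographic comparison after a swap. In the additive (classical/weighted) case, removing player $i$ from $e$ strictly decreases the cost contribution of $e$ for all remaining users, and the lexicographic potential argument is clean. Here $g$ is not monotone coordinatewise in the usual sense; Definition~\ref{DEFwm} only guarantees that for each fixed pair $x,y$ the direction of the inequality $g(\bm v',x)$ vs.\ $g(\bm v',y)$ is the same for all $\bm v'$. So I must (i) be sure that $g$-monotonicity of $c_e$ (Definition~\ref{DEFmonotonecost}) is strong enough that a player's cost genuinely does not increase when a co-user leaves resource $e$, \emph{regardless} of what that player's other $r-1$ components are; and (ii) handle the resource $f$ that gains player~$i$: its users' costs may rise, so I need player~$i$'s deviation gain on $f$ (or wherever the improvement lies) to dominate in the lexicographic sense. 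The cleanest route is probably to show that among all profitable deviations available to any player at $\mathbf{S}$, a \emph{lexicographically best-response} single swap exists and strictly decreases $\Phi$; then the finiteness of the profile set together with strict decrease of $\Phi$ yields termination at a profile with no profitable deviation, i.e.\ a pure Nash equilibrium. I would also double-check that the equal-cardinality property of matroid bases is what makes the fixed-length vector $\Phi$ (of length $nr$) well-defined across all profiles.
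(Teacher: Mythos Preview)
Your plan is essentially the paper's: a lexicographic potential plus the reduction, via simultaneous exchange, to a single-swap improving move. The one device you are missing---and it resolves exactly the obstacles you flag in your last paragraph---is to \emph{first} extract from the weak monotonicity of $g$ a total preorder $\preceq_g$ on $\RR_+$: declare $x\preceq_g y$ iff $g(\bm v',x)\le g(\bm v',y)$ for every $\bm v'\in\RR_+^{r-1}$. Definition~\ref{DEFwm} says precisely that this relation is total, and Definition~\ref{DEFmonotonecost} then reads simply as $c_e(X)\preceq_g c_e(Y)$ whenever $X\subseteq Y$. All cost comparisons in the argument must be with respect to $\preceq_g$, \emph{not} the usual real order; in particular your potential must be sorted by $\preceq_g$. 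With this in hand, the local-swap reduction (your vaguest step) becomes a two-line contradiction rather than a decomposition argument: take an improving move $S_i'$ minimizing $|S_i'\setminus S_i|$; if $|S_i'\setminus S_i|\ge 2$, simultaneous exchange gives $e^*\in S_i\setminus S_i'$ and $f^*\in S_i'\setminus S_i$ with $T_i=(S_i\setminus\{e^*\})\cup\{f^*\}$ and $T_i'=(S_i'\setminus\{f^*\})\cup\{e^*\}$ both bases and, by minimality, both non-improving. Comparing $\gamma_i(\mathbf S)$ with $\gamma_i(\mathbf S_{-i},T_i)$ yields $c_{e^*}(N_{e^*}(\mathbf S))\preceq_g c_{f^*}(N_{f^*}(\mathbf S)\cup\{i\})$, while $\gamma_i(\mathbf S_{-i},S_i')<\gamma_i(\mathbf S)\le\gamma_i(\mathbf S_{-i},T_i')$ yields $c_{f^*}(N_{f^*}(\mathbf S)\cup\{i\})\prec_g c_{e^*}(N_{e^*}(\mathbf S))$, a contradiction. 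This is Lemma~\ref{LEMlocal}.

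For the potential, the paper uses a per-resource vector of pairs $\phi_e(\mathbf S)=(c_e(N_e(\mathbf S)),n_e(\mathbf S))$ of length $|E|$, sorted with the first coordinate compared via $\preceq_g$ and the second by the usual order. Your per-incidence multiset of size $nr$ is an equally valid variant once sorted by $\preceq_g$: after an improving local swap $e^*\to f^*$ one has $c_{f^*}(N_{f^*}(\mathbf S'))\prec_g c_{e^*}(N_{e^*}(\mathbf S))$ (from the improvement) and $c_{e^*}(N_{e^*}(\mathbf S'))\preceq_g c_{e^*}(N_{e^*}(\mathbf S))$ (from $g$-monotonicity of $c_{e^*}$), so every changed entry in the new multiset lies $\preceq_g c_{e^*}(N_{e^*}(\mathbf S))$ while at least one copy at that level has disappeared. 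This handles your worry (ii): the other users of $f^*$ may see a larger value, but that value is still $\prec_g c_{e^*}(N_{e^*}(\mathbf S))$. The tie-breaking role played by $n_e$ in the paper's pair is played in your version by the multiplicity itself.
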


Before a proof, 
we present 
the following corollaries of Theorem \ref{THMPNE}. 
These corollaries 
are immediately derived from Theorem \ref{THMPNE}, 
and
guarantee the existence of pure Nash equilibria in some generalized models of congestion games.

\begin{corollary}
\label{CORbottleneck}
Every weighted matroid bottleneck congestion game 
in which the strategies of all players have the same cardinality 
has a pure Nash equilibrium. 
\end{corollary}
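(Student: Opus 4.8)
The plan is to realize a weighted matroid bottleneck congestion game as a special case of a matroid congestion game with set-functional costs and complementarities, and then invoke Theorem~\ref{THMPNE}. Let the given weighted bottleneck game be $(N, E, (\mathcal{S}_i)_{i\in N}, (b_e)_{e\in E}, (w_i)_{i\in N})$, where each $b_e\colon \RR_+\to\RR_+$ is monotonically nondecreasing, each $\mathcal{S}_i$ is the base family of a matroid on $E$, and—by hypothesis—there is a common cardinality $r$ with $|S_i|=r$ for every $i\in N$ and every $S_i\in\mathcal{S}_i$. The cost imposed on player $i$ in a profile $\mathbf{S}$ is $\gamma_i(\mathbf{S})=\max_{e\in S_i}\{b_e(w(N_e(\mathbf{S})))\}$.

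First I would define the aggregation function $g\colon\RR_+^r\to\RR$ by $g(x_1,\ldots,x_r)=\max\{x_1,\ldots,x_r\}$. This $g$ is symmetric under permutations, and since $\bm{u}\le\bm{v}$ implies $\max_j u_j\le\max_j v_j$, it is (componentwise) monotonically nondecreasing, hence weakly monotone in the sense of Definition~\ref{DEFwm}. Next I would define, for each $e\in E$, the set function $c_e\colon 2^N\to\RR_+$ by $c_e(N')=b_e(w(N'))$, exactly as in \eqref{EQreduction}. Because $b_e$ is nondecreasing and $w(X)\le w(Y)$ whenever $X\subseteq Y$ (weights are nonnegative), we get $c_e(X)\le c_e(Y)$ for $X\subseteq Y$; combined with the componentwise monotonicity of $\max$, this yields $g(\bm{v}',c_e(X))=\max\{\bm{v}',c_e(X)\}\le\max\{\bm{v}',c_e(Y)\}=g(\bm{v}',c_e(Y))$ for every $\bm{v}'\in\RR_+^{r-1}$, so $c_e$ is monotonically nondecreasing with respect to $g$ in the sense of Definition~\ref{DEFmonotonecost}.

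With these choices, the resulting tuple $(N,E,(\mathcal{S}_i)_{i\in N},(c_e)_{e\in E},g)$ is a matroid congestion game with set-functional costs and complementarities satisfying the hypotheses of Theorem~\ref{THMPNE}, and its player costs are $\gamma_i(\mathbf{S})=g(c_{e_1}(N_{e_1}(\mathbf{S})),\ldots,c_{e_r}(N_{e_r}(\mathbf{S})))=\max_{e\in S_i}\{b_e(w(N_e(\mathbf{S})))\}$, which coincides with the cost in the original weighted bottleneck game. Hence the set of pure Nash equilibria of the two games is identical, and by Theorem~\ref{THMPNE} the constructed game—and therefore the original one—has a pure Nash equilibrium.

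I do not anticipate a genuine obstacle here; the content is entirely in checking that $\max$ fits Definitions~\ref{DEFwm} and~\ref{DEFmonotonecost} and that the reduction \eqref{EQreduction} preserves monotonicity, both of which are routine. The only point requiring a word of care is the common-cardinality hypothesis: it is exactly what guarantees a single fixed arity $r$ for $g$, which is why it appears in the statement of the corollary.
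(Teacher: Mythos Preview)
Your proposal is correct and is precisely the argument the paper intends: the corollary is stated as immediately derived from Theorem~\ref{THMPNE}, and you have spelled out the routine verification that the $\max$ aggregation is weakly monotone (Definition~\ref{DEFwm}) and that the set-functional costs obtained via \eqref{EQreduction} are monotonically nondecreasing with respect to it (Definition~\ref{DEFmonotonecost}). Your remark about the common-cardinality hypothesis supplying the fixed arity $r$ is exactly the point.
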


\begin{corollary}
\label{CORlp}
Every weighted matroid congestion game with $L^p$-aggregation functions 
in which the strategies of all players have the same cardinality 
has a pure Nash equilibrium. 
\end{corollary}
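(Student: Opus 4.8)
The plan is to exploit the weak monotonicity of $g$ to replace the game by an equivalent one whose aggregation function is genuinely (coordinatewise) monotone, and then to run a lexicographic-potential argument in the spirit of \cite{ARV09,Tak19}. \emph{Linearization.} Definition~\ref{DEFwm} says precisely that the relation ``$x\preceq_g y$ iff $g(\bm{v}',x)\le g(\bm{v}',y)$ for all $\bm{v}'\in\RR_+^{r-1}$'' is a total preorder on $\RR_+$. Only the finitely many values in $V=\{c_e(X):e\in E,\ X\subseteq N\}$ ever occur, so there is a rank function $\psi\colon V\to\ZZ$ with $\psi(x)\le\psi(y)\iff x\preceq_g y$. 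Since $x\sim_g y$ makes $g$ indifferent to interchanging $x$ and $y$ in any coordinate, $g(v_1,\dots,v_r)$ with $v_j\in V$ depends only on $(\psi(v_1),\dots,\psi(v_r))$; writing it as $\tilde g(\psi(v_1),\dots,\psi(v_r))$, one checks easily that $\tilde g$ is symmetric and monotonically nondecreasing, and that Definition~\ref{DEFmonotonecost} turns into ``$\psi\circ c_e$ is an ordinary monotone set function''. Because $g\big((c_e(N_e(\mathbf S)))_{e\in S_i}\big)=\tilde g\big((\psi(c_e(N_e(\mathbf S))))_{e\in S_i}\big)$, the game with cost data $(\psi\circ c_e)_{e\in E}$ and $\tilde g$ has exactly the same player costs $\gamma_i$, hence the same equilibria, as the original; so I may assume henceforth that $g$ is symmetric and monotonically nondecreasing and each $c_e\colon 2^N\to\ZZ_+$ is monotonically nondecreasing. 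This is the announced simplification of the monotonicity condition of \cite{FLS17}.

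\emph{The potential and the dynamics.} For a strategy profile $\mathbf S$, let $\Phi(\mathbf S)$ be the $nr$-dimensional vector obtained by listing the value $c_e(N_e(\mathbf S))$ once for every incidence pair $(i,e)$ with $e\in S_i$ and sorting these values in nonincreasing order; compare such vectors lexicographically. Since $\Phi$ takes only finitely many values, it suffices to produce, from any non-equilibrium profile, a deviation that strictly decreases $\Phi$; iterating then terminates at an equilibrium. I will use only \emph{single-swap} deviations, and the key is the following pair of facts. First (the matroid step): if $S_i$ is not a best response to $\mathbf S_{-i}$, then $S_i$ is not a minimum-weight base of the matroid $\mathcal S_i$ with respect to the weights $\hat c_e:=c_e(N_e(\mathbf S)\cup\{i\})$ -- here I use that $i$'s cost at a base $B\in\mathcal S_i$ equals $g\big((\hat c_e)_{e\in B}\big)$, that $g$ is monotone and symmetric, and the classical fact that a minimum-weight base of a matroid has a coordinatewise-smallest sorted weight vector -- and hence, by the fundamental-circuit characterization of minimum-weight bases, there are $e\in S_i$ and $f\notin S_i$ with $(S_i\setminus\{e\})\cup\{f\}\in\mathcal S_i$ and $\hat c_f\prec_g\hat c_e$. (Conversely, when no player admits such a swap, every $S_i$ is a minimum-weight base and thus a best response, so $\mathbf S$ is an equilibrium.)

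\emph{Second fact: such a swap lowers $\Phi$.} Let $\mathbf S'$ be $\mathbf S$ with $S_i$ replaced by $(S_i\setminus\{e\})\cup\{f\}$. Only resources $e$ and $f$ change their congestion: $e$ loses player $i$, so $c_e(N_e(\mathbf S'))\preceq_g c_e(N_e(\mathbf S))=\hat c_e$ by monotonicity of $c_e$; $f$ gains $i$, so its new cost is $c_f(N_f(\mathbf S'))=\hat c_f$, which still satisfies $\hat c_f\prec_g\hat c_e$. In particular $\gamma_i(\mathbf S')\le\gamma_i(\mathbf S)$, so the deviation is legitimate; and, counting the incidences whose cost is $\succeq_g t$, this count is unchanged for every $t$ with $t\succ_g\hat c_e$, whereas for $t=\hat c_e$ it strictly decreases -- player $i$'s incidence on $e$ disappears, the surviving incidences on $e$ keep cost $\preceq_g\hat c_e$, and every incidence on $f$ now has cost $\prec_g\hat c_e$. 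Since these $\succeq_g$-threshold counts determine the sorted vector, it follows that $\Phi(\mathbf S')\prec_{\mathrm{lex}}\Phi(\mathbf S)$, completing the argument. Corollaries~\ref{CORbottleneck} and~\ref{CORlp} then follow, since the maximum and the $L^p$-aggregation function~\eqref{EQlp} are coordinatewise monotone -- hence weakly monotone -- and~\eqref{EQreduction} embeds weighted games into this model.

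\emph{Where the difficulty lies.} The main obstacle is the dovetailing of the two facts above: a profitable deviation need not be realizable by a single swap, and a single swap that leaves $\gamma_i$ unchanged can in fact \emph{increase} $\Phi$. Both issues are resolved at once by restricting attention to swaps with $\hat c_f\prec_g\hat c_e$: the matroid minimum-weight-base machinery guarantees such a swap exists precisely when the profile is not an equilibrium, and for exactly such swaps the change in $\Phi$ is controlled as above. Getting the matroid exchange and fundamental-circuit bookkeeping exactly right -- in particular the implication ``not a best response $\Rightarrow$ not a minimum-weight base of $\mathcal S_i$'', which is where the symmetry and monotonicity of $g$ together with the matroid structure of $\mathcal S_i$ all enter -- is the part that requires real care, and it follows the template of \cite{ARV09,Tak19}.
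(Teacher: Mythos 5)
Your argument is correct. In the paper, Corollary \ref{CORlp} is obtained as an immediate specialization of Theorem \ref{THMPNE} (the $L^p$-aggregation function is coordinatewise nondecreasing, hence weakly monotone, and \eqref{EQreduction} turns the weighted game into one with set-functional costs), so what you have really done is re-prove Theorem \ref{THMPNE} and then specialize. Your re-proof shares the paper's overall architecture --- exhibit a single-element swap that strictly decreases a lexicographic potential, and let termination yield the equilibrium --- but both key steps are implemented differently. For the existence of a good swap, the paper's Lemma \ref{LEMlocal} takes an improving move minimizing $|S_i'\setminus S_i|$ and applies the simultaneous exchange property (Lemma \ref{LEMsimul}) to force $|S_i'\setminus S_i|=1$, producing a \emph{strictly} improving local move; you instead invoke the minimum-weight-base/fundamental-circuit characterization with respect to the weights $\hat c_e=c_e(N_e(\mathbf S)\cup\{i\})$, which yields a swap with $\hat c_{f}\prec_g\hat c_{e}$ that is only weakly improving for the deviator --- still sufficient for existence, since at a local optimum of your potential every $S_i$ is a minimum-weight base and hence, by the coordinatewise minimality of its sorted weight vector together with the symmetry and monotonicity of $g$, a best response. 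For the potential, the paper sorts the $m$ per-resource pairs $(c_e(N_e(\mathbf S)),n_e(\mathbf S))$ with the congestion count as an explicit tie-breaking coordinate, whereas you sort the $nr$ per-incidence costs, so the multiplicity $n_e$ plays the tie-breaker's role implicitly; your threshold-counting verification of the lexicographic decrease is sound. Finally, your linearization of $\preceq_g$ by a rank function $\psi$, reducing weak monotonicity to ordinary coordinatewise monotonicity of a symmetric $\tilde g$ on integer ranks, is a clean preprocessing step the paper does not perform --- it manipulates the total preorder $\preceq_g$ directly throughout. Both routes work; yours buys a reduction to the familiar integer-weighted matroid setting and a potential in the style of the lexicographical-improvement framework, while the paper's route additionally shows that strictly improving local moves alone already converge.
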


Our proof for Theorem \ref{THMPNE} is obtained by extending the arguments in \cite{ARV09,Tak19} and \cite{FLS17}. 
In preparation for the proof, 
we first define a total preorder $\preceq_g$ 
on $\RR_+$, 
on the basis of the weak monotonicity of the aggregation function $g$. 
Recall that a relation $\preceq$ on a set $X$ is a \emph{preorder} 
if $x\preceq x$ for each $x\in X$, and 
$x\preceq y$ and $y\preceq z$ implies $x\preceq z$ 
for each $x,y,z\in X$. 
A preorder is \emph{total} if 
$x \preceq y$ or $y \preceq x$ holds for each $x,y\in X$. 
\begin{definition}[Total preorder $\preceq_g$ on $\RR_+$]
Let $g\colon \RR_+^r \to \RR$ be a weakly monotone function. 
For two nonnegative values $x,y\in \RR_+$, 
we define $x\preceq_g y$ if 
$$\mbox{$g(\bm{v}',x)\le g(\bm{v}',y)$ for each $\bm{v}'\in \RR_+^{r-1}$}.$$ 
Further, 
we define 
$x \sim_g y$ if both $x\preceq_g y$ and $y\preceq_g x$ hold, 
and 
$x \prec_g y$ if $x\preceq_g y$ but $y\not\preceq_g x$.

\end{definition}

It directly follows from Definition \ref{DEFwm} that the relation $\preceq_g$ is a total preorder. 
By employing this total preorder $\preceq_g$ 
and the simultaneous exchange property of matroids (Lemma \ref{LEMsimul}), 
we then show that, if a strategy profile is not a pure Nash equilibrium, 
a player has an \emph{improving local move}. 
For a strategy profile $\mathbf{S}=(S_1,\ldots, S_n)$ and a player $i\in N$, 
a strategy $S_i' \in \mathcal{S}_i$ of $i$ is an \emph{improving move} in $\mathbf{S}$ if 
the cost imposed on $i$ becomes smaller by choosing $S_i'$ instead of $S_i$, 
i.e., 
$\gamma_i(\mathbf{S}_{-i},S_i') < \gamma_i(\mathbf{S})$. 
An improving move $S_i'$ is called an \emph{improving local move} 
if $|S_i' \setminus S_i| = 1$. 

\begin{lemma}
\label{LEMlocal}
Let $\mathbf{S}=(S_1,\ldots,  S_n)$ be a strategy profile 
in a matroid congestion game 
$(N, E, (\mathcal{S}_i)_{i\in N}, (c_e)_{e\in E}, \agg)$ 
with set-functional costs and complementarities. 
If $\mathbf{S}$ is not a pure Nash equilibrium, 
then 
there exists a player
$i\in N$ having an improving local move in $\mathbf{S}$. 
\end{lemma}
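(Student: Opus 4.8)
The plan is to argue contrapositively, starting from a strategy profile $\mathbf{S}=(S_1,\dots,S_n)$ that is not a pure Nash equilibrium. Then some player $i$ has an improving move $S_i'\in\mathcal{S}_i$ with $\gamma_i(\mathbf{S}_{-i},S_i')<\gamma_i(\mathbf{S})$, and among all players with an improving move I would pick one and among its improving moves pick $S_i'$ minimizing $|S_i'\setminus S_i|$. The goal is to show this minimum is $1$. Suppose for contradiction that $|S_i'\setminus S_i|\ge 2$. I would like to produce from $S_i'$ a strategy $S_i''\in\mathcal{S}_i$ that is still an improving move but with $|S_i''\setminus S_i|<|S_i'\setminus S_i|$, contradicting minimality.

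The mechanism for reducing the symmetric difference is the simultaneous exchange property (Lemma~\ref{LEMsimul}) applied to the two bases $S_i,S_i'$: for a well-chosen element $e\in S_i\setminus S_i'$ there is $f\in S_i'\setminus S_i$ with $S_i''=(S_i'\setminus\{f\})\cup\{e\}\in\mathcal{S}_i$ (and also $(S_i\setminus\{e\})\cup\{f\}\in\mathcal{S}_i$, which I may need to exploit the optimality of $S_i'$). The key selection: since $S_i'$ is an improving move while $S_i$ is not profitably deviated to $S_i'$ itself does better, I would compare the multiset of resource costs (evaluated in the profile after $i$ switches) of $S_i'$ against those of $S_i$, using the total preorder $\preceq_g$ to linearly order these costs. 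Concretely, the resources in $S_i'\setminus S_i$ that are "better" (smaller under $\preceq_g$, when $i$ is added) are what make $S_i'$ improving; I want to keep one such beneficial resource $f$ while swapping back in some $e\in S_i\setminus S_i'$ whose cost (with $i$ present, as in $\mathbf{S}$) is $\preceq_g$-no-larger. Choosing $e\in S_i\setminus S_i'$ to be $\preceq_g$-maximal among $S_i\setminus S_i'$, then choosing the matching $f\in S_i'\setminus S_i$ given by Lemma~\ref{LEMsimul}, I would argue that replacing $f$ by $e$ in $S_i'$ does not destroy the improvement: the cost of $e$ in the new profile is at most its cost in $\mathbf{S}$ (by monotonicity with respect to $g$, since adding $i$ back to $e$ only uses a superset), and component-wise in the $\preceq_g$-order the resulting cost multiset is no worse than that of $S_i'$. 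Then weak monotonicity of $g$ (via $\preceq_g$) yields $\gamma_i(\mathbf{S}_{-i},S_i'')\le\gamma_i(\mathbf{S}_{-i},S_i')<\gamma_i(\mathbf{S})$.

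The main obstacle I anticipate is the bookkeeping of how resource costs change when $i$ moves from $S_i$ to the intermediate set $S_i''$: a resource $e$ that is in both $S_i$ and $S_i''$ keeps the same load, but a resource $f\in S_i'\setminus S_i$ that is dropped, or $e\in S_i\setminus S_i'$ that is re-added, changes its player set relative to the profile $(\mathbf{S}_{-i},S_i')$, and I must ensure these changes are all in the favorable $\preceq_g$-direction. This is exactly where Definition~\ref{DEFmonotonecost} (monotonicity of $c_e$ with respect to $g$) is used: $N_e$ in the new profile differs from $N_e$ in $(\mathbf{S}_{-i},S_i')$ by whether $i$ is present, and "more players $\Rightarrow$ $\preceq_g$-larger cost" lets me control the sign. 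I would organize the argument by listing the (at most three) resources whose load changes between $(\mathbf{S}_{-i},S_i')$ and $(\mathbf{S}_{-i},S_i'')$ — namely $e$, $f$, and no others — and comparing cost vectors coordinate-by-coordinate after sorting by $\preceq_g$, then invoking weak monotonicity of $g$ one coordinate at a time. The subtlety is that $\preceq_g$ is only a preorder (ties under $\sim_g$), so I must be careful that a chain of single-coordinate replacements is valid; this is handled since $\preceq_g$ is total, so any two cost multisets of equal size can be compared by a sequence of single-coordinate $\preceq_g$-monotone steps.
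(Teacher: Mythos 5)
Your skeleton coincides with the paper's (minimal-difference improving move $S_i'$, assume $|S_i'\setminus S_i|\ge 2$, apply the simultaneous exchange property of Lemma~\ref{LEMsimul} to get $e^*\in S_i\setminus S_i'$, $f^*\in S_i'\setminus S_i$, and contradict minimality via the one-swap-back strategy $S_i''=(S_i'\setminus\{f^*\})\cup\{e^*\}$), but the decisive step is missing: you never establish the comparison $c_{e^*}(N_{e^*}(\mathbf{S}))\preceq_g c_{f^*}(N_{f^*}(\mathbf{S})\cup\{i\})$ that makes $S_i''$ still improving. Choosing $e^*$ to be $\preceq_g$-maximal in $S_i\setminus S_i'$ gives no control here: $f^*$ is dictated by the exchange axiom, and nothing in your argument relates the cost of $e^*$ in $\mathbf{S}$ to the cost of $f^*$ in $(\mathbf{S}_{-i},S_i')$ --- Definition~\ref{DEFmonotonecost} only compares a single resource with itself under nested player sets, never two different resources, and the observation that $e^*$'s cost is unchanged when $i$ re-adds it is true but irrelevant to that cross-resource comparison. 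The paper gets exactly this missing inequality from the other half of the simultaneous exchange, which you mention parenthetically (``which I may need'') but never use: $T_i=(S_i\setminus\{e^*\})\cup\{f^*\}$ is a local move from $S_i$, hence by minimality (as $|S_i'\setminus S_i|\ge 2>1=|T_i\setminus S_i|$) it is not improving, so $\gamma_i(\mathbf{S}_{-i},T_i)\ge\gamma_i(\mathbf{S})$; since this cost vector differs from that of $S_i$ in a single coordinate, totality of $\preceq_g$ converts it (together with $\gamma_i(\mathbf{S}')<\gamma_i(\mathbf{S})\le\gamma_i(\mathbf{S}_{-i},T_i')$, again a single-coordinate comparison) into the two incompatible scalar relations $c_{e^*}(N_{e^*}(\mathbf{S}))\preceq_g c_{f^*}(N_{f^*}(\mathbf{S}'))$ and $c_{f^*}(N_{f^*}(\mathbf{S}'))\prec_g c_{e^*}(N_{e^*}(\mathbf{S}))$. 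Without invoking the non-improvement of $T_i$ (or equivalent information), your one-sided selection cannot close the argument.

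A secondary but genuine flaw is your final claim that, because $\preceq_g$ is total on $\RR_+$, ``any two cost multisets of equal size can be compared by a sequence of single-coordinate $\preceq_g$-monotone steps.'' Totality of the scalar preorder does not make the induced coordinatewise order on $r$-vectors total: after sorting, one multiset can be $\preceq_g$-smaller in some coordinates and $\preceq_g$-larger in others, and then no chain of one-coordinate replacements is monotone in a single direction, so weak monotonicity of $g$ cannot be applied along it. Fortunately you never actually need multiset comparison in this lemma: the profiles $(\mathbf{S}_{-i},S_i')$ and $(\mathbf{S}_{-i},S_i'')$ differ in exactly one coordinate of the aggregated cost vector ($f^*$ replaced by $e^*$, all other resources keeping the same player sets), so the correct route is the single-coordinate comparison supplied by the non-improvement of $T_i$, as in the paper, not a general multiset-exchange argument.
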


\begin{proof}
Since the strategy profile $\mathbf{S}$ is not a pure Nash equilibrium, 
there exists a player $i\in N$ having an improving move $S_i' \in \mathcal{S}_i$. 
Among the improving moves, 
choose $S_i'$ which minimizes $|S_i' \setminus S_i|$, 
and let $\mathbf{S}'$ denote the strategy profile $(\mathbf{S}_{-i},S_i')$. 
Since $S_i'$ is an improving move, 
it holds that 
\begin{align}
\label{EQSdashimproving}
\gamma_i(\mathbf{S}') < \gamma_i(\mathbf{S}).
\end{align}
We show $|S_i'\setminus S_i| =1$ 
to prove that $S_i'$ is an improving local move.

Suppose to the contrary that $|S_i' \setminus S_i| \ge 2$. 
It follows from 
Lemma \ref{LEMsimul} 
that 
there exist two resources 
$e^*\in S_i \setminus S_i'$ and  
$f^*\in S_i' \setminus S_i$ 
satisfying that 
$$T_i=(S_i \setminus \{e^*\}) \cup \{f^*\} \in \mathcal{S}_i \quad \mbox{and} \quad 
T_i'=(S_i' \setminus \{f^*\}) \cup \{e^*\} \in \mathcal{S}_i.$$ 
It then follows from $|S_i' \setminus S_i|\ge 2$ and the minimality of $|S_i' \setminus S_i|$ that 
neither $T_i$ nor $T_i'$ is an improving move, 
i.e., 
\begin{align}
\label{EQTnotimproving}
{}&{}\gamma_i(\mathbf{S}_{-i}, T_i) \ge \gamma_i(\mathbf{S}), \\
\label{EQTdashnotimproving}
{}&{}\gamma_i(\mathbf{S}_{-i}, T_i') \ge \gamma_i(\mathbf{S}). 
\end{align}

We have that 
\begin{align*}
\gamma_i(\mathbf{S}) 
{}&{}=g((c_e(N_e(\mathbf{S})))_{e\in S_i}), \\
\gamma_i(\mathbf{S}_{-i}, T_i) 
{}&{}= g((c_e(N_e(\mathbf{S}_{-i},T_i)))_{e\in T_i}) \\
{}&{}= g((c_e(N_e(\mathbf{S})))_{e\in S_i\setminus\{e^*\}}, c_{f^*}(N_{f^*}(\mathbf{S}_{-i},T_i))) \\{}&{}
= g((c_e(N_e(\mathbf{S})))_{e\in S_i\setminus\{e^*\}}, c_{f^*}(N_{f^*}(\mathbf{S})\cup\{i\})), 
\end{align*}
and hence it follows from \eqref{EQTnotimproving} that 
\begin{align}
\label{EQelef}
c_{e^*}(N_{e^*}(\mathbf{S}))\preceq_g c_{f^*}(N_{f^*}(\mathbf{S}) \cup \{i\}) = c_{f^*}
(N_{f^*}(\mathbf{S}')) .
\end{align}

Further, 
it follows from \eqref{EQSdashimproving} and \eqref{EQTdashnotimproving} that 
$\gamma_i(\mathbf{S}') <  \gamma_i(\mathbf{S}_{-i},T_i')$. 
We have that 
\begin{align*}
{}&{}\gamma_i(\mathbf{S}')=g( c_e(N_e(\mathbf{S}'))_{e\in S_i'} ),\\
{}&{}\gamma_i(\mathbf{S}_{-i},T_i')  = \gamma_i(\mathbf{S}_{-i}, (S_i' \setminus \{f^*\}) \cup \{e^*\}) 
                                = g( (c_e(N_e(\mathbf{S}')))_{e\in S_i'\setminus \{f^*\}},c_{e^*}(N_{e^*}(\mathbf{S})) ), 
\end{align*}
and hence 
\begin{align}
\label{EQflee}
c_{f^*}(N_{f^*}(\mathbf{S}')) \prec_g   c_{e^*}(N_{e^*}(\mathbf{S})).
\end{align}
We now have both \eqref{EQelef} and \eqref{EQflee}, a contradiction. 
\end{proof}

On the basis of Lemma \ref{LEMlocal}, 
we prove Theorem \ref{THMPNE} 
by showing that 
an improving local move 
strictly decreases a lexicographic order on the potential of the strategy profile. 
For this purpose, 
below we define the potential $\Phi(\mathbf{S})$ of a strategy profile $\mathbf{S}$
and 
a lexicographic order on the potentials. 

Let $|E|=m$. 
For a resource $e \in E$, 
define a two-dimensional vector $\phi_e(\mathbf{S}) \in \RR_+ \times \ZZ_+$ by 
$$\phi_e(\mathbf{S}) = (c_e(N_e(\mathbf{S})), n_e(\mathbf{S})).$$
\begin{definition}
For two resources $e,e' \in E$, 
we define 
$\phi_e(\mathbf{S}) \lex \phi_{e'}(\mathbf{S})$ 
if 
either
$$
c_e(\mathbf{S}) \prec_g c_{e'}(\mathbf{S})
$$
or 
$$
\mbox{$c_e(\mathbf{S}) \sim_g c_{e'}(\mathbf{S})$   and   $n_e(\mathbf{S}) \leq n_{e'}(\mathbf{S'})$}
$$
holds. 
Further, 
we define 
$\phi_e(\mathbf{S}) \lexeq \phi_{e'}(\mathbf{S})$ if 
$$\mbox{$\phi_e(\mathbf{S}) \lex \phi_{e'}(\mathbf{S})$ and  
$\phi_{e'}(\mathbf{S}) \lex \phi_{e}(\mathbf{S})$}$$
hold, 
and 
$\phi_{e}(\mathbf{S}) \lexneqq \phi_{e'}(\mathbf{S})$ if 
$$\mbox{$\phi_{e}(\mathbf{S})\lex \phi_{e'}(\mathbf{S})$ and 
$\phi_{e'}(\mathbf{S})\not\lex \phi_{e}(\mathbf{S})$}$$
hold. 
\end{definition}
Now 
the potential 
$\Phi(\mathbf{S})$ 
is a sequence $(\phi_e(\mathbf{S}))_{e \in E}$ of all resources in the lexicographically nonincreasing order. 
Namely, 
$$\mbox{$\Phi(\mathbf{S}) = (\phi_{e_1}(\mathbf{S}),\ldots, \phi_{e_m}(\mathbf{S}))$, 
where 
$\phi_{e_1}(\mathbf{S}) \succlex \cdots \succlex \phi_{e_m}(\mathbf{S})$}. $$

\begin{definition}[Lexicographic order on the potentials]
For two strategy profiles $\mathbf{S}$ and $\mathbf{S}'$, 
where 
$\Phi(\mathbf{S})= (\phi_{e_1}(\mathbf{S}),\ldots, \phi_{e_m}(\mathbf{S}))$
and 
$\Phi(\mathbf{S}')= (\phi_{e'_1}(\mathbf{S}),\ldots, \phi_{e'_m}(\mathbf{S}))$, 
we define $\Phi(\mathbf{S}) \lex \Phi(\mathbf{S'})$ 
if there exists an integer $k$ with $1 \le k \le m$ such that 
\begin{align*}
    &\mbox{$\phi_{e_j}(\mathbf{S}) \lexeq \phi_{e'_j}(\mathbf{S})$ for every $j < k$},\\
    &\phi_{e_k}(\mathbf{S}) \lexstrict\phi_{e'_k}(\mathbf{S}).    
\end{align*}
A strict relation $\Phi(\mathbf{S}) \lexstrict \Phi(\mathbf{S'})$ means that 
$\Phi(\mathbf{S}) \lex \Phi(\mathbf{S'})$ 
and 
$\Phi(\mathbf{S}')  \not\lex \Phi(\mathbf{S})$ hold. 
\end{definition}

We are now ready to prove Theorem \ref{THMPNE}. 

\begin{proof}[Proof of Theorem \ref{THMPNE}]

Let $\mathbf{S}=(S_1,\ldots,S_n)$ be a strategy profile, 
and suppose that it is not a pure Nash equilibrium. 
It then follows from Lemma \ref{LEMlocal} that
there exists a player $i \in N$ having a local improving move 
$S_i' = (S_i \setminus \{e^*\}) \cup \{f^*\}$.
Denote $\mathbf{S}'=(\mathbf{S}_{-i}, S_i')$. 
Recall that 
$\gamma_i(\mathbf{S}') < \gamma_i(\mathbf{S})$
holds. 

We analyze the lexicographic order of the potentials of  $\Phi(\mathbf{S})$ and $\Phi(\mathbf{S}')$. 
First, 
for each resource $e \in E \setminus \{e^*,f^*\}$, 
it holds that $N_e(\mathbf{S})=N_e(\mathbf{S}')$
and hence 
\begin{align}
\label{EQ5}
\phi_e(\mathbf{S}) = \phi_e(\mathbf{S}') \quad (e \in E \setminus \{e^*,f^*\}). 
\end{align}

We then investigate the two resources $e^*$ and $f^*$. 
First, 
it follows from $\gamma_i(\mathbf{S}') < \gamma_i(\mathbf{S})$ that 
\begin{align*}
g\left((c_e(N_e(\mathbf{S})))_{e \in S_i \setminus \{e^*\}}, c_{f^*}(N_{f^*}(\mathbf{S}')) \right) 
< 
g\left((c_e(N_e(\mathbf{S})))_{e \in S_i \setminus \{e^*\}}, c_{e^*}( N_{e^*}(\mathbf{S}) ) \right), 
\end{align*}
and hence 
\begin{align}
\label{EQ2}
c_{f^*}(N_{f^*}(\mathbf{S}')) \prec_g c_{e^*}(N_{e^*}(\mathbf{S})).
\end{align}
Further, 
it is straightforward to see that 
$i \in N_{e^*}(\mathbf{S})$ and 
$N_{e^*}(\mathbf{S}') = N_{e^*}(\mathbf{S})\setminus \{i\}$,
and thus 
\begin{align}
\label{EQ4}
{}&{}n_{e^*}(\mathbf{S}') < n_{e^*}(\mathbf{S}), \\
\label{EQ3}
{}&{}c_{e^*}(N_{e^*}(\mathbf{S}')) \preceq_g c_{e^*}(N_{e^*}(\mathbf{S})), 
\end{align}
where the latter relation \eqref{EQ3} follows from the monotonicity of 
the cost function $c_{e^*}$ (Definition \ref{DEFmonotonecost}). 

Now 
it follows from \eqref{EQ2} that 
\begin{align}
\phi_{f^*} (\mathbf{S}')
= ( c_{f^*}( N_{f^*}(\mathbf{S}') ),n_{f^*}(\mathbf{S'})  ) 
\lexstrict ( c_{e^*}(N_{e^*}(\mathbf{S}) ),n_{e^*}(\mathbf{S}) ) 
\label{EQ6}
= \phi_{e^*}(\mathbf{S}).
\end{align}
We also obtain from \eqref{EQ4} and \eqref{EQ3} that 
\begin{align}
\phi_{e^*}(\mathbf{S'}) 
= \left( c_{e^*}( N_{e^*} (\mathbf{S}') ),n_{e^*}(\mathbf{S'})  \right) 
\lexstrict \left( c_{e^*}( N_{e^*} (\mathbf{S}) ),n_{e^*}(\mathbf{S})  \right) 
\label{EQ7}
= \phi_{e^*}(\mathbf{S}).
\end{align}
From \eqref{EQ5}, \eqref{EQ6}, and \eqref{EQ7}, 
we derive 
that 
$
\Phi(\mathbf{S}') \lexstrict \Phi(\mathbf{S})$.

Therefore, an improving local move strictly decreases the 
lexicographic order of the potential $\Phi(\mathbf{S})$. 
On the basis of the finiteness of the game, 
after a finite number of improving local moves, 
a strategy profile in which no player has an improving local move is attained. 
It now follows from Lemma \ref{LEMlocal} that this strategy profile is a pure Nash equilibrium. 
\end{proof}

\section{Generalizations of Congestion Games with Mixed Costs}
\label{SECmixed-g}

In this section, 
we 
deal with generalizations of  congestion games with mixed costs, 
and 
provide some extensions of Theorem \ref{THMmixed}. 

\subsection{Player-Specific Congestion Games with Mixed Costs}

One model is a \emph{player-specific congestion game with mixed costs},
which is represented by a tuple 
\begin{align*}
(N, R, (\mathcal{S}_i)_{i \in N},  (\ell_{i,e})_{i\in N, e\in E}, (b_{i,e})_{i\in N,e\in E} , (\alpha_i)_{i \in N}).
\end{align*}
The difference from the model of congestion games with mixed costs in Section \ref{SECmixed} is 
that 
the latency and bottleneck cost functions are player-specific. 
Namely, 
each player $i\in N$ has her own latency cost function $\ell_{i,e}\colon \ZZ_+ \to \RR_+$ 
and bottleneck cost function $b_{i,e}\colon \ZZ_+ \to \RR_+$ for each resource $e\in E$. 
Hence, 
the cost $\gamma_i(\mathbf{S})$ imposed on a player $i \in N$ in a strategy profile $\mathbf{S}=(S_1,\ldots, S_n)$ is described as 
\begin{align*}
\gamma_i(\mathbf{S}) = 
\alpha_i \cdot \sum_{e \in S_i}\ell_{i,e}(n_e(\mathbf{S})) + (1 - \alpha_i) \cdot \max_{e \in S_i}\{ b_{i,e} (n_e(\mathbf{S})) \}. 
\end{align*}
We assume that the cost functions 
$\ell_{i,e}$ and $b_{i,e}$
are monotonically nondecreasing 
for each $i\in N$ and $e\in E$. 

The following theorem extends Theorem \ref{THMmixed}\ref{ENUsingleton}\ref{ENU01}. 
The proof below is a modest extension of that in \cite{FLS18}.

\begin{theorem}
A player-specific congestion game 
$G$
with mixed costs 
has a pure Nash equilibrium 
in the following two cases. 
\begin{enumerate}
\item
	\label{ENUsingleton_g}
	$G$ is a singleton game. 
\item
	\label{ENU01_g}
	$G$ is a matroid game and $\alpha_i \in \{0,1\}$ for each player $i\in N$. 
\end{enumerate}
\end{theorem}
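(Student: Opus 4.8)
The plan is to handle the two cases separately, in each case exhibiting a pure Nash equilibrium by reducing to a setting where equilibrium existence is already known, or by a direct potential-type argument.

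For case \ref{ENUsingleton_g}, the singleton case, every strategy of every player consists of a single resource, so the bottleneck term $\max_{e\in S_i}\{b_{i,e}(n_e(\mathbf{S}))\}$ coincides with $b_{i,e}(n_e(\mathbf{S}))$ for the unique resource $e\in S_i$. Consequently the cost $\gamma_i(\mathbf{S})$ on a player choosing resource $e$ with load $n_e(\mathbf{S})$ equals $\alpha_i\cdot\ell_{i,e}(n_e(\mathbf{S})) + (1-\alpha_i)\cdot b_{i,e}(n_e(\mathbf{S}))$, which is a single monotonically nondecreasing function of the load $n_e(\mathbf{S})$ depending only on the pair $(i,e)$. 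Define $c_{i,e}(k) = \alpha_i\cdot\ell_{i,e}(k) + (1-\alpha_i)\cdot b_{i,e}(k)$; this is monotonically nondecreasing since it is a convex combination of monotonically nondecreasing functions. Then $G$ is precisely a player-specific singleton congestion game with cost functions $c_{i,e}$, and Milchtaich's result (the singleton case of Theorem \ref{THMps}) yields a pure Nash equilibrium. I would spell out that the best-response dynamics and equilibrium notion are identical under this reduction, so no further work is needed.

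For case \ref{ENU01_g}, where $G$ is a matroid game and each $\alpha_i\in\{0,1\}$, I would partition the players into $N_1 = \{i : \alpha_i = 1\}$ and $N_0 = \{i : \alpha_i = 0\}$. For $i\in N_1$ the cost is the total latency $\sum_{e\in S_i}\ell_{i,e}(n_e(\mathbf{S}))$, and for $i\in N_0$ it is the bottleneck $\max_{e\in S_i}\{b_{i,e}(n_e(\mathbf{S}))\}$. The key observation, supplied by Lemma \ref{LEMgreedy}, is that for a player $i\in N_0$, given the strategies of all other players (hence a fixed load profile on each resource), a base minimizing the maximum of $b_{i,e}(n_e)$ over the base is attained by any base minimizing the sum $\sum_{e\in S_i}b_{i,e}(n_e)$. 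Thus a bottleneck player has a best response that is also a best response for the ``summed-cost'' objective with player-specific cost functions $b_{i,e}$. This lets me embed $G$ into a player-specific matroid congestion game $G'$ in which every player minimizes a sum of player-specific monotonically nondecreasing costs: players in $N_1$ keep $\ell_{i,e}$, players in $N_0$ use $b_{i,e}$. By Theorem \ref{THMps}, $G'$ has a pure Nash equilibrium $\mathbf{S}^*$. Finally I would verify that $\mathbf{S}^*$ is also a pure Nash equilibrium of $G$: for $i\in N_1$ the cost functions and hence best responses in $G$ and $G'$ coincide; for $i\in N_0$, Lemma \ref{LEMgreedy} shows that $S_i^*$, being a minimizer of $\sum_{e\in S_i}b_{i,e}(n_e(\mathbf{S}^*_{-i},S_i))$ over $\mathcal{S}_i$, also minimizes $\max_{e\in S_i}\{b_{i,e}(n_e(\mathbf{S}^*_{-i},S_i))\}$, so player $i$ has no improving deviation in $G$ either.

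The main obstacle is the second case, specifically making the embedding argument airtight: one must check that the load $n_e(\mathbf{S})$ seen by a bottleneck player is unaffected by whether we evaluate that player's cost as a sum or as a maximum (it is, since loads are determined by the strategy profile, not by the cost model), and that Lemma \ref{LEMgreedy} applies with the fixed costs $c_e = b_{i,e}(n_e(\mathbf{S}_{-i}))$ for each deviation considered. A subtle point worth stating explicitly is that equilibrium in $G'$ means no player can reduce her summed cost, and for $N_0$-players this is formally weaker than ``no improving maximum-cost deviation'' only in appearance — Lemma \ref{LEMgreedy} bridges exactly this gap, since a base optimal for the sum is simultaneously optimal for the max.
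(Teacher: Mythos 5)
Your proposal is correct and follows essentially the same route as the paper: case (i) is reduced to a player-specific singleton congestion game with costs $\alpha_i\ell_{i,e}+(1-\alpha_i)b_{i,e}$, and case (ii) is reduced to a player-specific matroid congestion game using $\ell_{i,e}$ for $\alpha_i=1$ players and $b_{i,e}$ for $\alpha_i=0$ players, with Lemma \ref{LEMgreedy} transferring sum-optimality to max-optimality for the bottleneck players. The only cosmetic remark is that the fixed per-resource cost in your Lemma \ref{LEMgreedy} application should be $b_{i,e}(n_e(\mathbf{S}_{-i})+1)$ (the load including player $i$ herself), which does not affect the argument.
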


\begin{proof}
We first investigate Case \ref{ENUsingleton_g}. 
Let $\mathbf{S} =(S_1,\ldots, S_n)$ be a strategy profile 
and let $i\in N$ be a player. 
Since $G$ is a singleton game, 
it holds that $S_i=\{e\}$ for some resource $e\in E$, 
and 
\begin{align*}
\gamma_i(\mathbf{S}) = 
\alpha_i \cdot \ell_{i,e}(n_e(\mathbf{S})) + (1 - \alpha_i)\cdot   b_{i,e} (n_e(\mathbf{S})) . 
\end{align*}
Now define a new player-specific cost function $c'_{i,e} \colon \ZZ_+ \to \RR_+$ by 
$$
c'_{i,e}(x) = \alpha_i \cdot\ell_{i,e}(x) + (1 - \alpha_i)\cdot   b_{i,e} (x) \quad (x \in \ZZ_+) 
$$
for each player $i \in N$ and each resource $e\in E$,   
and construct 
a player-specific singleton congestion game 
$$G'=(N, E, (\mathcal{S}_i)_{i \in N}, (c'_{i,e})_{i\in N, e\in E}),$$ 
in which the existence of a pure Nash equilibrium is proved in \cite{Mil96} (see also Theorem \ref{THMps}). 
Clearly, a pure Nash equilibrium in $G'$ is also a 
pure Nash equilibrium in $G$.

We next consider Case \ref{ENU01_g}.    
Let $\mathbf{S} =(S_1,\ldots, S_n)$ be a strategy profile. 
For a player $i\in N$, 
her cost $\gamma_i(\mathbf{S})$ is described as 
\begin{align*}
\gamma_i(\mathbf{S}) 
{}&{}=  \alpha_i\cdot \sum_{e \in S_i}\ell_{i,e}(n_e(\mathbf{S})) + (1 - \alpha_i)\cdot \max_{e \in S_i}\{ b_{i,e} (n_e(\mathbf{S})) \} \\
{}&{}=  
	\begin{cases}
	\displaystyle
	\sum_{e \in S_i}\ell_{i,e}(n_e(\mathbf{S})) 		& \mbox{if $\alpha_i = 1$},\\
	\displaystyle
	\max_{e \in S_i}\{ b_{i,e} (n_e(\mathbf{S})) \} 	& \mbox{if $\alpha_i = 0$}.
	\end{cases}
\end{align*}
Similarly to the above proof, 
define a new player-specific cost function $c''_{i,e}\colon \ZZ_+\to \RR_+$ by 
\begin{align*}
c''_{i,e}(x) =
	\begin{cases}
	\displaystyle
	\ell_{i,e}(x) 		& \mbox{if $\alpha_i = 1$},\\
	\displaystyle
	b_{i,e} (x)  	& \mbox{if $\alpha_i = 0$} 
	\end{cases}
	\quad (x\in \ZZ_+) 
\end{align*}
for each player $i\in N$ and each resource $e\in E$ 
to construct a player-specific matroid congestion game 
$$G''=(N, E, (\mathcal{S}_i)_{i \in N}, (c''_{i,e})_{i\in N, e\in E}),$$ 
which is guaranteed to possess a pure Nash equilibrium in \cite{ARV09} (Theorem \ref{THMps}). 
Since $G$ is a matroid game, 	
it follows from Lemma \ref{LEMgreedy} that 
a strategy $S_i\in \mathcal{S}_i$ minimizing 
$\sum_{e \in S_i}b_{i,e} (n_e(\mathbf{S}_{-i},S_i)) $ 
also minimizes 
$\max_{e \in S_i}\{ b_{i,e} (n_e(\mathbf{S}_{-i},S_i)) \}$.  
Now it is straightforward to see that 
a pure Nash equilibrium in $G''$ is also a 
pure Nash equilibrium in $G$. 
\end{proof}

\subsection{Weighted Congestion Games with Mixed Costs and Beyond}

Another model is a \emph{congestion game with mixed and set-functional costs}, 
which includes a \emph{weighted congestion game with mixed costs}. 
A congestion game with mixed and set-functional costs 
is represented by a tuple 
$$(N, E, (\mathcal{S}_i)_{i \in N}, (\ell_e)_{e\in E}, (b_e)_{e \in E}, (\alpha_i)_{i\in N}).$$
A new feature of this model is 
that 
the latency cost function $\ell_e\colon 2^N \to \RR_+$ and 
the bottleneck cost function $b_e\colon 2^N \to \RR_+$ are set functions for each resource $e\in E$. 
Again, 
these cost functions are assumed to be monotonically nondecreasing. 
Moreover, 
we generalize 
the definition of the monotone dependence 
of the cost functions $\ell_e$ and $b_e$ as 
the existence of a monotonically nondecreasing function $d\colon \RR_+ \to \RR_+$ satisfying that 
$$
b_e(N') = d(\ell_e(N')) \quad (N' \subseteq N)
$$
for each resource $e\in E$. 

For this model, 
we extend Theorem \ref{THMmixed}\ref{ENUmd} 
to obtain the following theorem. 
\begin{theorem}
A congestion game 
$G$
with mixed 
and set-functional costs 
has a pure Nash equilibrium 
if 
	$G$ is a matroid game and the cost functions have monotone dependence. 
\end{theorem}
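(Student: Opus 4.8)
The plan is to follow the strategy used for Theorem~\ref{THMmixed}\ref{ENUmd} in \cite{FLS18}, but to substitute Theorem~\ref{THMmonotone} for the matroid congestion game result it rests on and Lemma~\ref{LEMgreedy} for the greedy-type argument. First I would form the auxiliary congestion game with set-functional costs $G' = (N, E, (\mathcal{S}_i)_{i\in N}, (\ell_e)_{e\in E})$ whose cost functions are exactly the latency functions of $G$. Since each $\ell_e$ is monotonically nondecreasing and $G'$ is a matroid game because $G$ is, Theorem~\ref{THMmonotone} yields a pure Nash equilibrium $\mathbf{S}^* = (S_1^*,\ldots,S_n^*)$ of $G'$. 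The claim to establish is that $\mathbf{S}^*$ is also a pure Nash equilibrium of $G$.

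To prove the claim, fix a player $i\in N$ and put $\hat{c}_e = \ell_e(N_e(\mathbf{S}^*_{-i})\cup\{i\})$ for every $e\in E$. The point is that for every $S_i\in\mathcal{S}_i$ and every $e\in S_i$ one has $N_e(\mathbf{S}^*_{-i},S_i) = N_e(\mathbf{S}^*_{-i})\cup\{i\}$, so $\ell_e(N_e(\mathbf{S}^*_{-i},S_i)) = \hat{c}_e$; consequently the latency term $\sum_{e\in S_i}\ell_e(N_e(\mathbf{S}^*_{-i},S_i))$ equals $\sum_{e\in S_i}\hat{c}_e$ and the bottleneck term's inner maximum $\max_{e\in S_i}\{\ell_e(N_e(\mathbf{S}^*_{-i},S_i))\}$ equals $\max_{e\in S_i}\{\hat{c}_e\}$. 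Because $\mathbf{S}^*$ is a pure Nash equilibrium of $G'$, the base $S_i^*$ minimizes $\sum_{e\in S_i}\hat{c}_e$ over all bases in $\mathcal{S}_i$, so Lemma~\ref{LEMgreedy} shows that $S_i^*$ simultaneously minimizes $\max_{e\in S_i}\{\hat{c}_e\}$. Since $d$ is monotonically nondecreasing, $\max_{e\in S_i}\{b_e(N_e(\mathbf{S}^*_{-i},S_i))\} = d\big(\max_{e\in S_i}\{\ell_e(N_e(\mathbf{S}^*_{-i},S_i))\}\big)$, and therefore $S_i^*$ minimizes the bottleneck term as well.

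Finally, for an arbitrary $S_i'\in\mathcal{S}_i$, setting $\mathbf{S}' = (\mathbf{S}^*_{-i},S_i')$ and adding the two minimality statements with the nonnegative coefficients $\alpha_i$ and $1-\alpha_i$ gives
\begin{align*}
\gamma_i(\mathbf{S}^*)
&= \alpha_i\sum_{e\in S_i^*}\ell_e(N_e(\mathbf{S}^*)) + (1-\alpha_i)\max_{e\in S_i^*}\{b_e(N_e(\mathbf{S}^*))\} \\
&\le \alpha_i\sum_{e\in S_i'}\ell_e(N_e(\mathbf{S}')) + (1-\alpha_i)\max_{e\in S_i'}\{b_e(N_e(\mathbf{S}'))\}
= \gamma_i(\mathbf{S}'),
\end{align*}
so no player has an improving move and $\mathbf{S}^*$ is a pure Nash equilibrium of $G$. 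I expect the only genuinely delicate point to be the reduction to the static costs $\hat{c}_e$: once player $i$'s own contribution to the congestion on the resources she might use is frozen, her best-response problem becomes a static matroid minimization to which Lemma~\ref{LEMgreedy} applies, and the monotone dependence hypothesis is exactly what lets the nondecreasing function $d$ be pulled outside the maximum so that the latency-optimal base is also bottleneck-optimal; the remainder is routine bookkeeping.
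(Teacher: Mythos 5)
Your proposal is correct and follows essentially the same route as the paper's proof: pass to the auxiliary matroid game with set-functional costs $\ell_e$, invoke Theorem~\ref{THMmonotone} for a pure Nash equilibrium there, then use Lemma~\ref{LEMgreedy} and the monotone dependence (pulling the nondecreasing $d$ outside the maximum) to show the same profile is an equilibrium of $G$. Your explicit freezing of the costs $\hat{c}_e$ before applying Lemma~\ref{LEMgreedy} is a welcome clarification of a step the paper leaves implicit, but it does not change the argument.
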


\begin{proof}
Let $\mathbf{S}$ be a strategy profile of $G$ and 
$i \in N$ be a player. 
Since the cost functions have monotone dependence, 
it holds that 
\begin{align*}
\gamma_i(\mathbf{S}) 
{}&{}= \alpha_i \cdot \sum_{e\in S_i} \ell_{e}(N_{e}(\mathbf{S})) + 
(1-\alpha_i)\cdot \max_{e\in S_i} \{b_{e}(N_{e}(\mathbf{S}))\} \\
{}&{}= \alpha_i\cdot  \sum_{e\in S_i} \ell_{e}(N_{e}(\mathbf{S})) + (1-\alpha_i)\cdot 
\max_{e \in S_i}\{d( \ell_{e}(N_{e}(\mathbf{S})))\}
\end{align*}
for a monotonically nondecreasing function $d\colon \RR_+\to \RR_+$. 

Now construct a congestion game 
$G'=(N, E, (\mathcal{S}_i)_{i \in N}, (c_e)_{e\in E})$ with set-functional costs 
$c_e\colon 2^N\to \RR_+$ ($e\in E$)
defined by 
\begin{align*}
c_e(N') = \ell_e(N') \quad (N'\subseteq N). 
\end{align*}
It follows from Theorem \ref{THMmonotone} that 
$G'$ has  
a pure Nash equilibrium $\mathbf{S}=(S_1,\ldots,S_n)$.  

We have that $S_i$ minimizes $\sum_{e\in S_i} \ell_{e}(N_{e}(\mathbf{S}_{-i},S_i))$ 
among all the strategies in $\mathcal{S}_i$. 
Since $G$ is a matroid game, 
it follows from Lemma \ref{LEMgreedy} that 
$S_i $ also minimizes $\max_{e \in S_i} \{\ell_{e}(N_{e}(\mathbf{S}_{-i},S_i))\}$. 
From the monotonicity of $d$,
it follows that  
$$
\max_{e \in S_i} \{d(\ell_{e}(N_{e}(\mathbf{S})))\} = 
d\left(\max_{e \in S_i} \{\ell_{e}(N_{e}(\mathbf{S}))\}\right),
$$
and 
hence $S_i$ further minimizes $\max_{e \in S_i} \{d(\ell_{e}(N_{e}(\mathbf{S}_{-i},S_i)))\}$. 
Consequently, 
$S_i$ minimizes $\gamma_i(\mathbf{S}_{-i},S_i)$, 
and 
therefore 
we conclude that $\mathbf{S}$ is also a pure Nash equilibrium in $G$. 
\end{proof}

\section{Conclusion}

We have presented a model of congestion games 
with set-functional costs and complementarities, 
which commonly generalizes previous models 
such as 
weighted congestion games, 
congestion games with set-functional costs, 
congestion games with bottleneck costs, 
and 
congestion games with complementarities. 
Main technical contributions of this paper are 
a simplification of the monotonicity of the aggregation functions 
and 
a proof of the existence of pure Nash equilibria 
for matroid games under the simplified definition of the monotone aggregation functions. 
We have also extended the existence of pure Nash equilibria in 
congestion games with mixed costs 
to 
player-specific and weighted congestion games. 

A possible direction of future research is to analyze the algorithmic aspects and the efficiency of the pure Nash equilibria, 
such as 
the convergence time, 
the price of anarchy, 
and the price of stability. 
It is also of interest to investigate 
a relation between our model and 
congestion games with lexicographical improvement property \cite{HKM13}.

\bibliographystyle{abbrv}
\bibliography{refs}

\begin{thebibliography}{10}

\bibitem{ARV09}
H.~Ackermann, H.~R{\"{o}}glin, and B.~V{\"{o}}cking.
\newblock Pure {Nash} equilibria in player-specific and weighted congestion
  games.
\newblock {\em Theor. Comput. Sci.}, 410(17):1552--1563, 2009.

\bibitem{Aum59}
R.~J. Aumann.
\newblock Acceptable points in general cooperative $n$-person games.
\newblock In R.~D. Luce and A.~W. Tucker, editors, {\em Contribution to the
  Theory of Games IV}, pages 287--324. Princeton University Press, 1959.

\bibitem{BO07}
R.~Banner and A.~Orda.
\newblock Bottleneck routing games in communication networks.
\newblock {\em {IEEE} J. Sel. Areas Commun.}, 25(6):1173--1179, 2007.

\bibitem{BV23book}
V.~Bil{\`{o}} and C.~Vinci.
\newblock {\em Coping with Selfishness in Congestion Games---Analysis and
  Design via {LP} Duality}.
\newblock Monographs in Theoretical Computer Science. An {EATCS} Series.
  Springer, 2023.

\bibitem{Bru69}
R.~A. Brualdi.
\newblock Comments on bases in dependence structures.
\newblock {\em Bull. Austral. Math. Soc.}, 1:161--167, 1969.

\bibitem{DFRS15}
M.~Drees, M.~Feldotto, S.~Riechers, and A.~Skopalik.
\newblock On existence and properties of approximate pure {N}ash equilibria in
  bandwidth allocation games.
\newblock In M.~Hoefer, editor, {\em 8th International Symposium on Algorithmic
  Game Theory, {SAGT} 2015}, volume 9347 of {\em Lecture Notes in Computer
  Science}, pages 178--189. Springer, 2015.

\bibitem{DFRS19}
M.~Drees, M.~Feldotto, S.~Riechers, and A.~Skopalik.
\newblock Pure {N}ash equilibria in restricted budget games.
\newblock {\em J. Comb. Optim.}, 37(2):620--638, 2019.

\bibitem{FLS17}
M.~Feldotto, L.~Leder, and A.~Skopalik.
\newblock Congestion games with complementarities.
\newblock In D.~Fotakis, A.~Pagourtzis, and V.~T. Paschos, editors, {\em 10th
  International Conference on Algorithms and Complexity, {CIAC} 2017}, volume
  10236 of {\em Lecture Notes in Computer Science}, pages 222--233, 2017.

\bibitem{FLS18}
M.~Feldotto, L.~Leder, and A.~Skopalik.
\newblock Congestion games with mixed objectives.
\newblock {\em J. Comb. Optim.}, 36(4):1145--1167, 2018.

\bibitem{HHKS13}
T.~Harks, M.~Hoefer, M.~Klimm, and A.~Skopalik.
\newblock Computing pure {Nash} and strong equilibria in bottleneck congestion
  games.
\newblock {\em Math. Program.}, 141(1-2):193--215, 2013.

\bibitem{HKM13}
T.~Harks, M.~Klimm, and R.~H. M{\"{o}}hring.
\newblock Strong equilibria in games with the lexicographical improvement
  property.
\newblock {\em Int. J. Game Theory}, 42(2):461--482, 2013.

\bibitem{HKP18}
T.~Harks, M.~Klimm, and B.~Peis.
\newblock Sensitivity analysis for convex separable optimization over integral
  polymatroids.
\newblock {\em {SIAM} J. Optim.}, 28(3):2222--2245, 2018.

\bibitem{KT22}
F.~Kiyosue and K.~Takazawa.
\newblock A common generalization of budget games and congestion games.
\newblock In P.~Kanellopoulos, M.~Kyropoulou, and A.~A. Voudouris, editors,
  {\em 15th International Symposium on Algorithmic Game Theory, {SAGT} 2022},
  volume 13584 of {\em Lecture Notes in Computer Science}, pages 258--274.
  Springer, 2022.

\bibitem{Kuk07}
N.~S. Kukushkin.
\newblock Congestion games revisited.
\newblock {\em Int. J. Game Theory}, 36(1):57--83, 2007.

\bibitem{Kuk15}
N.~S. Kukushkin.
\newblock Rosenthal's potential and a discrete version of the {D}ebreu-{G}orman
  theorem.
\newblock {\em Autom. Remote. Control.}, 76(6):1101--1110, 2015.

\bibitem{Mil96}
I.~Milchtaich.
\newblock Congestion games with player-specific payoff functions.
\newblock {\em Game. Econ. Behav.}, 13:111--124, 1996.

\bibitem{MS96}
D.~Monderer and L.~S. Shapley.
\newblock Potential games.
\newblock {\em Games Econ. Behav.}, 14:124--143, 1996.

\bibitem{Mur00}
K.~Murota.
\newblock {\em Matrices and Matroids for Systems Analysis}.
\newblock Springer, 2010.

\bibitem{NRTV07}
N.~Nisan, T.~Roughgarden, {\'E}.~Tardos, and V.~V. Vazirani, editors.
\newblock {\em Algorithmic Game Theory}.
\newblock Cambridge University Press, 2007.

\bibitem{Rec89}
A.~Recski.
\newblock {\em Matroid Theory and Its Application in Electric Network Theory
  and in Statics}.
\newblock Springer, 1989.

\bibitem{Ros73a}
R.~W. Rosenthal.
\newblock A class of games possessing pure-strategy {Nash} equilibria.
\newblock {\em Int. J. Game Theory}, 2:65--67, 1973.

\bibitem{Rou16}
T.~Roughgarden.
\newblock {\em Twenty Lectures on Algorithmic Game Theory}.
\newblock Cambridge University Press, 2016.

\bibitem{Sch03}
A.~Schrijver.
\newblock {\em Combinatorial Optimization---Polyhedra and Efficiency}.
\newblock Springer, 2003.

\bibitem{Tak19}
K.~Takazawa.
\newblock Generalizations of weighted matroid congestion games: pure {N}ash
  equilibrium, sensitivity analysis, and discrete convex function.
\newblock {\em J. Comb. Optim.}, 38(4):1043--1065, 2019.

\bibitem{Tak24}
K.~Takazawa.
\newblock Pure nash equilibria in weighted congestion games with
  complementarities and beyond.
\newblock In N.~Alechina, V.~Dignum, M.~Dastani, and J.~Sichman, editors, {\em
  23rd International Conference on Autonomous Agents and Multiagent Systems,
  AAMAS 2024}. IFAAMAS, 2024.
\newblock to appear.

\end{thebibliography}

\end{document}